\pgfplotsset{compat=1.12}
\definecolor{myred}{RGB}{214,39,40}
\definecolor{mygreen}{RGB}{44,160,44}
\definecolor{myblue}{RGB}{31,119,180}
\definecolor{myorange}{RGB}{254,127,14}
\newlength{\hatchspread}
\newlength{\hatchthickness}
\newlength{\hatchshift}
\newcommand{\hatchcolor}{}
\tikzset{hatchspread/.code={\setlength{\hatchspread}{#1}},
         hatchthickness/.code={\setlength{\hatchthickness}{#1}},
         hatchshift/.code={\setlength{\hatchshift}{#1}},
         hatchcolor/.code={\renewcommand{\hatchcolor}{#1}}}
\tikzset{hatchspread=3pt,
         hatchthickness=0.4pt,
         hatchshift=0pt,
         hatchcolor=black}
\title{Classification of Real Solutions of the Fourth Painlev\'e Equation}
\author{Jeremy Schiff and Michael Twiton  \\
  Department of Mathematics, \\
  Bar-Ilan University, Ramat Gan, 5290002, Israel \\
   {\tt schiff@math.biu.ac.il} , {\tt mtwito101@gmail.com}}
\date{\today}
\newtheorem*{theorem*}{Theorem}
\begin{document}

\maketitle

\begin{abstract}
Painlev\'e transcendents are usually considered as complex functions
of a complex variable, but in applications it is often the real cases
that are of interest. Under a reasonable assumption (concerning the behavior
of a dynamical system associated with Painlev\'e IV, as discussed in a 
recent paper), we give a  number of results towards a  classification
of the real solutions of Painlev\'e IV (or, more precisely,
symmetric Painlev\'e IV), according to their asymptotic behavior
and singularities. We show the existence of globally nonsingular real 
solutions of symmetric Painlev\'e IV for arbitrary nonzero values of the parameters, with the dimension of the space of such solutions and their 
asymptotics depending on the signs of the parameters. We show that 
for a generic choice of the parameters, there exists a unique  finite
sequence of singularities for which symmetric Painlev\'e IV has a two-parameter family of solutions with this singularity sequence. There also exist 
solutions with singly infinite and doubly infinite sequences of singularities, and we identify which such sequences are possible (assuming generic 
asymptotics in the case of a singly infinite sequence). Most (but not all) 
of the special solutions of Painlev\'e IV correspond to nongeneric values 
of the parameters, but we mention some results for these solutions as well. 
\end{abstract}

\section{Introduction and Contents of This Paper}
The six Painlev\'e equations were initially discovered in the context of the classification of second order ordinary
differential equations with the property that the only movable singularities of their solutions are poles. In this context
solutions of the Painlev\'e equations are naturally considered as complex functions of a complex variable. Since their
initial discovery, however, many applications of Painlev\'e equations have emerged (see \cite{clarkson2006painleve, NIST:DLMF} for a comprehensive list).
In many of these applications, the relevant solutions are real functions of a real variable. It is therefore
of interest to have a classification of real solutions. In a recent paper \cite{SchiffTwiton} we described a dynamical systems approach
to the fourth Painlev\'e equation ($P_\mathrm{IV}$). In the current paper we use this approach
to develop a qualitative classification of the real solutions of 
$P_\mathrm{IV}$ for suitable parameter values. 

In fact we work with $sP_\mathrm{IV}$, the symmetric version of $P_\mathrm{IV}$. 
Recall that $P_\mathrm{IV}$ is the equation 
\begin{equation}
		\label{eq:p4}
		\frac{{\mathrm{d}}^{2}w}{{\mathrm{d}z}^{2}}
		=\frac{1}{2w}\left(\frac{\mathrm{d}w}{\mathrm{d}z}\right)^{2}
		+\frac{3}{2}w^{3}+4zw^{2}+2(z^{2}-\alpha)w+\frac{
			\beta}{w},
\end{equation}
with two parameters, $\alpha$ and $\beta$ (we take $\beta \le 0$).
$sP_\mathrm{IV}$ is the three-dimensional system
	\begin{subequations}
		\label{eq:fsys}
		\begin{align}
			\frac{\mathrm{d} f_1}{\mathrm{d} x}&=f_1 (f_2-f_3)+\alpha_1 \ , \\
			\frac{\mathrm{d} f_2}{\mathrm{d} x}&=f_2 (f_3-f_1)+\alpha_2 \ , \\
			\frac{\mathrm{d} f_3}{\mathrm{d} x}&=f_3 (f_1-f_2)+\alpha_3 \ , 
		\end{align}
	\end{subequations}
	subject to 
	\begin{equation}
		\label{eq:alphasum}
		\alpha_1 + \alpha_2 + \alpha_3 = 1 \ ,
	\end{equation}
        and
	\begin{equation}
		\label{eq:fsum}
		f_1+f_2+f_3=x \ .
	\end{equation}
$sP_\mathrm{IV}$ was known to Bureau \cite{bureau}  but was rediscovered by Adler \cite{adler} 
and Noumi and Yamada \cite{noumi1998affine,noumi1999symmetries}, amongst others.
The relationship between  $P_\mathrm{IV}$ and  $sP_\mathrm{IV}$ is a little subtle: 
If $f_1,f_2,f_3$ is a solution of \eqref{eq:fsys}--\eqref{eq:fsum} and   we set $w(z) = -\sqrt{2} f_1(x)$,
where $z = \frac{x}{\sqrt{2}}$, then $w(z)$ is a solution of \eqref{eq:p4} with parameter values
$\alpha = \alpha_3 - \alpha_2$ and $\beta = -2 \alpha_1^2$.  
But also, by the evident cyclic symmetry of  $sP_\mathrm{IV}$, 
if we set $w(z) = -\sqrt{2} f_2(x)$, then $w(z)$ is a solution of \eqref{eq:p4} with parameter values
$\alpha = \alpha_1 - \alpha_3$ and $\beta = -2 \alpha_2^2$,    
and if we set $w(z) = -\sqrt{2} f_3(x)$, then $w(z)$ is a solution of \eqref{eq:p4} with parameter values
$\alpha = \alpha_2 - \alpha_1$ and $\beta = -2 \alpha_3^2$. Thus the three components of a solution of the
$sP_\mathrm{IV}$  system
generically give three distinct solutions of  $P_\mathrm{IV}$, with different parameter values. Going the other way 
(i.e. using a solution of $P_\mathrm{IV}$  to find  a solution  of  $sP_\mathrm{IV}$) is more involved, as 
if $\beta = - 2 \alpha_1^2$ then $\alpha_1 = \pm \sqrt{ -\frac{\beta}{2} }$, so each of the maps from a solution of 
$sP_\mathrm{IV}$ to a solution of  $P_\mathrm{IV}$ can be inverted in two different ways. 
These ambiguities give rise to some of the symmetries or B\"acklund transformations of
$P_\mathrm{IV}$ and $sP_\mathrm{IV}$ \cite{clarkson2008fourth}; we will describe these more succinctly in Section 2.  
From here on we work with $sP_\mathrm{IV}$ throughout; all results can be translated to equivalent results for
$P_\mathrm{IV}$ with $\beta\le 0$, but these are less elegant.

The parameters of $sP_\mathrm{IV}$ are $\alpha_1,\alpha_2,\alpha_3$
satisfying (\ref{eq:alphasum}). If we introduce $\xi,\eta$  via
\begin{eqnarray}
    \alpha_1 &=&  \frac13 + \xi \ ,  \\
    \alpha_2 &=&  \frac13 -  \frac12 \xi  + \frac{\sqrt{3}}{2}\eta \ ,  \\
    \alpha_3 &=&  \frac13 -  \frac12 \xi  - \frac{\sqrt{3}}{2}\eta \ ,  
\end{eqnarray}
then a set of parameters corresponds to a point on the $(\xi,\eta)$ plane. The lines on which one of 
$\alpha_1,\alpha_2,\alpha_3$ is an integer form a triangular lattice on this plane, see Figure \ref{fig:lattice}. By generic parameter values we mean
any value of the parameters for which all the $\alpha_i$ are noninteger. For values of the parameters corresponding to
the vertices of the lattice or the centers of the cells of the lattice
(marked, respectively, by black and white circles in Figure \ref{fig:lattice}), there exist
rational solutions, see \cite{clarkson2008fourth}. For other nongeneric values of the parameters, solutions are known in
terms of special functions \cite{clarkson2008fourth}. 
Section 5 of this paper will be devoted to special solutions; 
apart from this, the focus of this paper is on generic parameter values. 

\begin{figure}
  \begin{center}
      \includegraphics[width=0.5\textwidth]{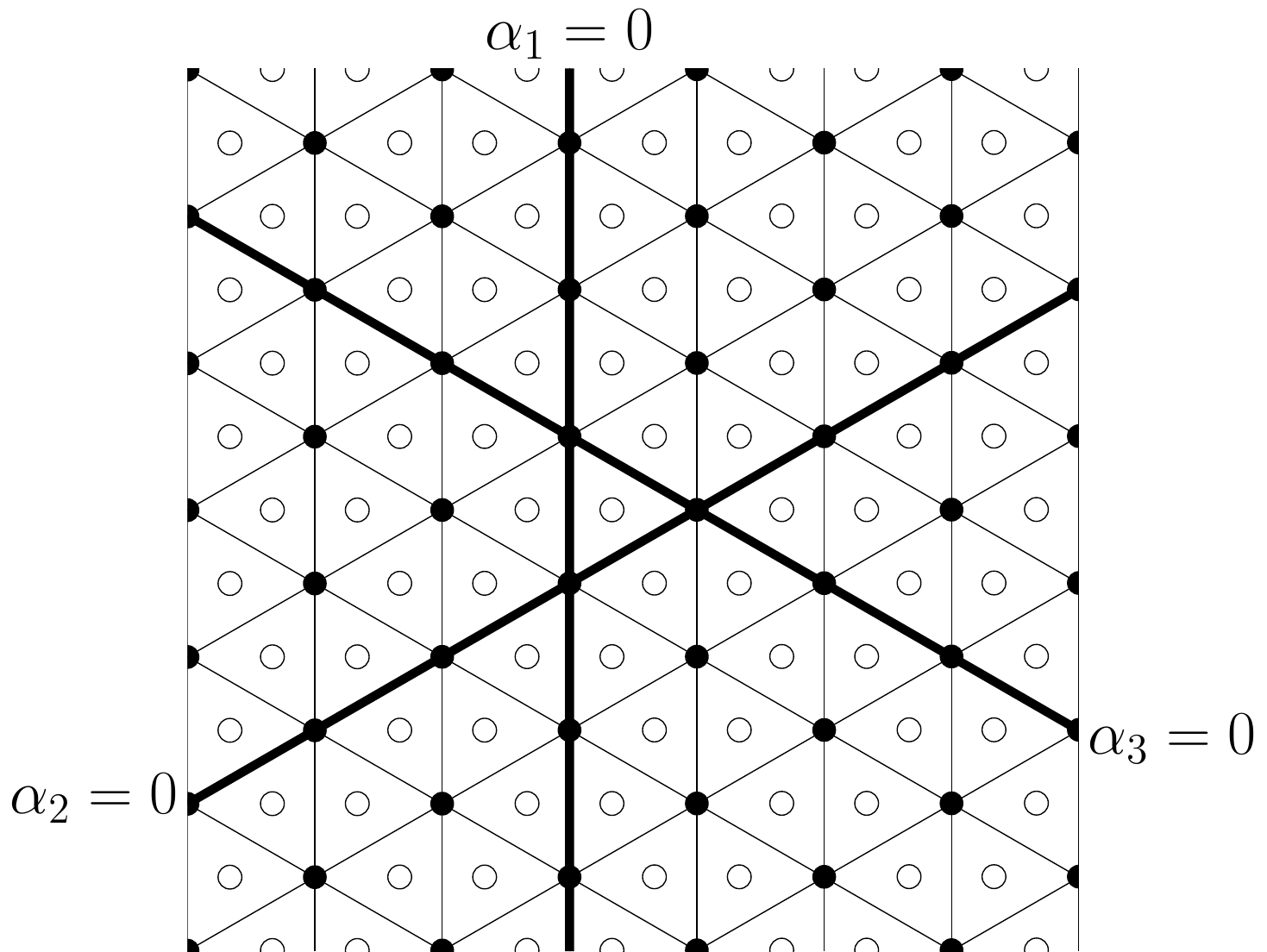}
  \end{center}
  \caption{
  The plane $\alpha_1+\alpha_2+\alpha_3=1$ viewed from the positive normal direction, with lines drawn where one of the $\alpha_i$ is integral.
  The lines on which one the $\alpha_i$ vanish are marked in bold.  
  The values of $\alpha_1,\alpha_2,\alpha_3$ at a point are given by
  (suitable signed) perpendicular distances to these lines.  }
    \label{fig:lattice}
 \end{figure}
 
In Section 2 of this paper we review the necessary results from \cite{SchiffTwiton}. In \cite{SchiffTwiton} we considered the
Poincar\'e compactification of $sP_\mathrm{IV}$, showed it had $14$ fixed points on its boundary, and studied the stability of these
fixed points. The starting point of the current paper will be the assumption that all orbits in the interior of the compactification both
``start from'' and ``end at'' one of these fixed points (more formally: each orbit has a single $\alpha-$limit point and a single $\omega-$limit
point). This is currently an assumption, as in \cite{SchiffTwiton} we could not exclude the possibility of orbits with $\alpha-$ or $\omega-$
limit sets given by certain periodic orbits on the boundary. The assumption is consistent with everything known, and, in particular, with a
substantial amount of numerical evidence. 

Also in Section 2 we present the symmetry group of $sP_\mathrm{IV}$. This was not considered in \cite{SchiffTwiton}, and the use
of the symmetry group allows us to significantly  extend the results of \cite{SchiffTwiton}. 

In Section 3 we derive results concerning real solutions of $sP_\mathrm{IV}$ with no singularities on the  entire real axis. These correspond
directly to solutions of the compactification of $sP_\mathrm{IV}$ going from one of the 4 fixed points we label $B_1^-,B_2^-,B_3^-,C^-$
as the independent variable tends to
$-\infty$ to one of the 4 fixed points we label $B_1^+,B_2^+,B_3^+,C^+$ as 
the independent variable tends to 
$+\infty$.    We show the following:
\begin{itemize}
\item   If $\alpha_1,\alpha_2,\alpha_3$ are all positive, then there exist the following  families of solutions with no singularities on the
  entire real axis:
  \begin{itemize}
  \item  a two-parameter family going from $C^-$ to $C^+$.  
  \item  one-parameter families going from $C^-$ to each of $B_1^+,B_2^+,B_3^+$. 
  \item  one-parameter families going from each of $B_1^-,B_2^-,B_3^-$ to $C^+$. 
  \item  isolated solutions going from $B_i^-$ to $B_j^+$ for $i\not= j \in\{1,2,3\}$.   
   \end{itemize}
\item If one of $\alpha_1,\alpha_2,\alpha_3$ is negative, say $\alpha_k$, and the other two positive, then there exist the following  families of solutions with no singularities on the entire real axis:
  \begin{itemize}
  \item  a one parameter family going from $C^-$ to $B_{k-1~{\rm mod}~3}^+$. 
  \item  a one parameter family going from $B_{k-1~{\rm mod}~3}^-$  to $C^+$. 
  \item  isolated solutions going from $B_{k-1~{\rm mod}~3}^-$    to each of $B_k^+, B_{k+1~{\rm mod}~3}^+$ 
          and from each of  $B_k^-, B_{k+1~{\rm mod}~3}^-$   to   $B_{k-1~{\rm mod}~3}^+$. 
   \end{itemize} 
\item If one of $\alpha_1,\alpha_2,\alpha_3$ is positive, say $\alpha_k$, and the other two negative, then there exist  at least two solutions with no singularities on the entire real axis, 
    going from $B_k^-$ to $B_{k+1~{\rm mod}~3}^+$
    and from  $B_{k+1~{\rm mod}~3}^-$ to $B_k^+$. 
\end{itemize}

In Section 4 we move on to solutions of $sP_\mathrm{IV}$ with singularities. Solutions of $sP_\mathrm{IV}$ have 3 different types of singularities, at each of which two of the functions $f_1,f_2,f_3$
have simple poles and the third has a zero.  Unlike the global solutions described above, these do not correspond to a single
solution of the compactified system, but rather a sequence of solutions. The first (second) ((third)) type of singularity corresponds to a pair of solutions of the compactification,
with the first ``ending'' at $A_1^+$ ($A_2^+$) (($A_3^+$))  and the second ``starting'' at $A_1^-$ ($A_2^-$) (($A_3^-$)). Here $A_1^+, A_2^+, A_3^+,A_1^-, A_2^-, A_3^-$ denote the $6$
fixed points of the compactification not yet mentioned.  Using restrictions on orbits of the compactification shown in  \cite{SchiffTwiton}, and further restrictions
found using the symmetry group, we give a full description of the possible sequences of singularities for generic solutions of 
$sP_\mathrm{IV}$ for generic values of the parameters. In particular we prove the following: {\em for any generic choice of the parameters, there exists a unique {\em finite}
  sequence of singularities for which $sP_\mathrm{IV}$ has a two-parameter family of solutions with this singularity sequence}. In the previous paragraph
we already stated that if all the $\alpha_i$ are positive, there exists a two-parameter family of solutions with no singularities; now we can add that
if all the $\alpha_i$ are positive, then there do not exist two-parameter families of solutions with a nonzero finite number of singularities. Similarly, for
cases where the $\alpha_i$ are of mixed sign, we have seen that there does not exist a two-parameter family of solutions with no singularities; but there does exist
a two-parameter family of solutions with a specific finite singularity sequence.  In addition to describing the solutions with finite sequences of singularities, we identify
which  singly infinite and doubly infinite sequences of singularities are allowed. 

Section 5 discusses special solutions.  Our main intention in this section is to
briefly examine the singularities of the special solutions of $sP_\mathrm{IV}$ 
on the real line, but in addition we mention two other results that we believe are new  (or, at least, generalizations
of existing results). In the case of rational solutions, we show 
how to find a pair of polynomial equations that 
the functions  $f_1,f_2,f_3$ satisfy; the rational solutions can be obtained by solving these polynomial equations, along with constraint (\ref{eq:fsum}).  
For the case of nongeneric value of the parameters, i.e. parameters for which one of the $\alpha_i$ is an integer, we show that 
there are special solutions obtained from the solution of  a single first order differential equation. This is a generalization of a classical result (see for example \cite{FA}) 
that for the case $\beta=-2(\alpha\pm 1)^2$, there are special solutions of $P_\mathrm{IV}$ that can be obtained from the solution of  a Riccati equation.  

Section 6 contains a brief summary and some concluding remarks. 
In Appendix A we briefly describe the numerical methods used to integrate $sP_\mathrm{IV}$ through singularities. In Appendix B we briefly present some numerical results on global $B$ to $B$ type solutions, the significance of which is explained in Section 3.

\section{Background}

\subsection{The Poincar\'e Compactification of $sP_\mathrm{IV}$}

In \cite{SchiffTwiton} we described the Poincar\'e compactification of  $sP_\mathrm{IV}$ (which is also related to the compactification of $P_\mathrm{IV}$
on a projective space described by Chiba \cite{chiba}). The Poincar\'e compactification is a flow on the closed unit ball in ${\bf R}^3$.
Solutions of $sP_\mathrm{IV}$ between singularities correspond to orbits of the compactification in the interior of the ball; the boundary
(which we call ``the sphere at infinity'') is an invariant submanifold, on which the flow can be completely solved. All orbits in the interior of the
ball have $\alpha$--limit sets on the closed lower hemisphere of sphere at infinity and $\omega$--limit sets on the closed upper hemisphere. 
The flow has $14$ fixed points on the sphere at infinity, four (which we label $B_1^-,B_2^-,B_3^-,C^-$) in the open lower hemisphere, 
four (which we label $B_1^+,B_2^+,B_3^+,C^+$) in the open upper hemisphere, and six (which we label $A_1^-,A_2^-,A_3^-, A_1^+,A_2^+,A_3^+$) on the equator.
The points $A_1^+,A_2^+,A_3^+,C^+$ are asymptotically stable, in the sense that any orbit in the interior of the ball, sufficiently close to one of these
points, will converge to that point as $t\rightarrow + \infty$ ($t$ denotes the independent variable of the compactification). Similarly,  the points $A_1^-,A_2^-,A_3^-,C^-$ are asymptotically unstable (i.e. stable
as $t\rightarrow -\infty$).  The points $B_1^-, B_2^-, B_3^-, B_1^+, B_2^+, B_3^+$ are of mixed stability: There are two-dimensional stable manifolds
in the interior of the ball associated with each of the points $B_1^+,B_2^+,B_3^+$ (and one-dimensional unstable manifolds on the sphere at infinity).
Similarly, there are two dimensional unstable manifolds
in the interior of the ball associated with each of the points $B_1^-,B_2^-,B_3^-$.

In \cite{SchiffTwiton} we could not exclude the possibility of there being orbits in the interior of the ball with $\alpha-$ or $\omega-$limit sets
that are closed orbits on the sphere at infinity (and not one of the $14$ fixed points). But we have no numerical evidence for such orbits, and
neither is there 
any suggestion in the extensive literature on $P_\mathrm{IV}$ of a solution with appropriate asymptotic behavior. Therefore {\em we proceed in this paper on the assumption
that no such orbit exists}. We then have two partitions of the interior of the ball. The first is into the four open sets that are the basins of attraction of 
each of the points $A_1^+,A_2^+,A_3^+,C^+$ as $t\rightarrow+\infty$, separated by the three nonintersecting stable manifolds of the points $B_1^+,B_2^+,B_3^+$.  
The second is into the four open sets that are the ``basins of repulsion'' of  $A_1^-,A_2^-,A_3^-,C^-$ as $t\rightarrow-\infty$, separated by the three nonintersecting unstable manifolds of the points $B_1^-,B_2^-,B_3^-$.  
Note that $sP_\mathrm{IV}$ has the obvious symmetry $f(x)\rightarrow -f(-x)$, which  relates the two partitions. 

In addition to performing local analysis of the fixed points, in \cite{SchiffTwiton} we considered the question of whether there could exist orbits connecting
each of the four fixed points $A_1^-,A_2^-,A_3^-,C^-$ to each of the four fixed points  $A_1^+,A_2^+,A_3^+,C^+$, and gave a set of rules for the permitted transitions, deduced
by looking at the signs of $f_1,f_2,f_3$ near the fixed points, and using the fact that the signs of the parameters $\alpha_1,\alpha_2,\alpha_3$ determine the changes of
sign of $f_1,f_2,f_3$ respectively at their zeros between singularities. We reproduce the rules of permitted and forbidden transitions in Table \ref{tab:excls}.
In addition to showing forbidden transitions (indicated
with an {\sffamily X}), for  permitted transitions we give a list of numbers, showing which of the functions $f_1,f_2,f_3$ change sign in the course of the transition.

\begin{table}[h]
		\begin{center}
			\begin{tabular}{c|cccc}
				$+++$   & $C^+$  & $A_1^+$ & $A_2^+$ &  $A_3^+$  \\ \hline
				$C^-$   &  1,2,3 &  1,3          & 1,2     & 2,3  \\
				$A_1^-$ &    1,3    &  \text{\sffamily X} & 1 & 3 \\
				$A_2^-$ &   1,2     &   1         & \text{\sffamily X}  & 2 \\ 
				$A_3^-$ &   2,3     & 3           & 2  &  \text{\sffamily X}
			\end{tabular}
		\end{center}
		\begin{center}
			\begin{tabular}{c|cccc}
				$++-$    & $C^+$  & $A_1^+$ & $A_2^+$ &  $A_3^+$  \\\hline
				$C^-$   & \text{\sffamily X} & \text{\sffamily X}       &  1,2     & 2         \\
				$A_1^-$ & \text{\sffamily X} & \text{\sffamily X}       &  1     &         \\
				$A_2^-$ &  1,2       &  1     &  1,2,3     &  2,3        \\ 
				$A_3^-$ &   2     &      &  2,3     &  \text{\sffamily X}
			\end{tabular}
			~~
			\begin{tabular}{c|cccc}
				$-++$    & $C^+$  & $A_1^+$ & $A_2^+$ &  $A_3^+$  \\ \hline
				$C^-$   & \text{\sffamily X} &  3     & \text{\sffamily X}     &  2,3       \\
				$A_1^-$ &    3     & \text{\sffamily X}       &      &  1,3       \\
				$A_2^-$ & \text{\sffamily X}       &       &  \text{\sffamily X}      &      2   \\ 
				$A_3^-$ &     2,3   & 1,3      &    2   & 1,2,3 
			\end{tabular}
			~~
			\begin{tabular}{c|cccc}
			        $+-+$   & $C^+$  & $A_1^+$ & $A_2^+$ &  $A_3^+$  \\ \hline
				$C^-$   & \text{\sffamily X} &  1,3    &    1   & \text{\sffamily X}       \\
				$A_1^-$ &   1,3    &   1,2,3     &  1,2     &   3      \\
				$A_2^-$ &    1    &  1,2     &  \text{\sffamily X}      &     \\ 
				$A_3^-$ & \text{\sffamily X}       &    3   &     & \text{\sffamily X}
			\end{tabular}
		\end{center}
		\begin{center}
			\begin{tabular}{c|cccc}
				$--+$    & $C^+$     & $A_1^+$ & $A_2^+$  &  $A_3^+$  \\ \hline
				$C^-$   & \text{\sffamily X}  &     3    &\text{\sffamily X} & \text{\sffamily X} \\
				$A_1^-$ &   3        &    1,2,3     &    2     &  3         \\
				$A_2^-$ & \text{\sffamily X}  & 2        &\text{\sffamily X} &          \\ 
				$A_3^-$ & \text{\sffamily X}  & 3        &   &  \text{\sffamily X} 
			\end{tabular}
			~~
			\begin{tabular}{c|cccc}
				$+--$   & $C^+$  & $A_1^+$ & $A_2^+$ &  $A_3^+$  \\ \hline
				$C^-$   & \text{\sffamily X}  & \text{\sffamily X} &    1       &  \text{\sffamily X}  \\
				$A_1^-$ & \text{\sffamily X}  & \text{\sffamily X} &    1       &           \\
				$A_2^-$ &   1        &  1        &  1,2,3         & 3           \\ 
				$A_3^-$ & \text{\sffamily X}  &        &   3        &  \text{\sffamily X} 
			\end{tabular}
			~~
			\begin{tabular}{c|cccc}
				$-+-$   & $C^+$  & $A_1^+$ & $A_2^+$ &  $A_3^+$  \\ \hline
				$C^-$   & \text{\sffamily X}&  \text{\sffamily X}  & \text{\sffamily X} & 2         \\
				$A_1^-$ & \text{\sffamily X} & \text{\sffamily X}&  & 1    \\
				$A_2^-$ & \text{\sffamily X} &         & \text{\sffamily X} & 2         \\ 
				$A_3^-$ &  2       &    1     &     2    & 1,2,3
			\end{tabular}
		\end{center}
	\caption{
          Excluded transitions:
           Top line: $\alpha_1,\alpha_2,\alpha_3>0$.
           Middle line:  
			$\alpha_1,\alpha_2>0$, $\alpha_3<0$ (left) ; 
			$\alpha_2,\alpha_3>0$, $\alpha_1<0$ (middle) ;
		       $\alpha_3,\alpha_1>0$, $\alpha_2<0$ (right).
           Bottom line: 
			$\alpha_1,\alpha_2<0$, $\alpha_3>0$ (left) ; 
			$\alpha_2,\alpha_3<0$, $\alpha_1>0$ (middle) ;
	   $\alpha_3,\alpha_1<0$, $\alpha_2>0$ (right). {\sffamily X} indicates an excluded transition. 
	   For  permitted transitions a list of numbers is given, showing which of the functions $f_1,f_2,f_3$ change sign in the course of the transition. }
                        \label{tab:excls}
\end{table}

\subsection{The Symmetry Group of $sP_\mathrm{IV}$}

$sP_\mathrm{IV}$ clearly has a ${\bf Z}_3$  cyclic symmetry generated by the transformation $\sigma$, where 
$$   \sigma(\alpha_i) = \alpha_{i+1~{\rm mod}~3}   \ , \qquad\sigma(f_i) = f_{i+1~{\rm mod}~3}  \ . $$
It is straightforward to directly verify that there is a further symmetry  $\tau$  given by
$$
\begin{array}{lll}
  \tau(\alpha_1) = -\alpha_1\ ,   &  \tau(\alpha_2) = \alpha_2 + \alpha_1\ ,   &  \tau(\alpha_1) = \alpha_3 +\alpha_1\ ,   \\
  \tau(f_1) = f_1 \ ,  &  \tau(f_2) = f_2 + \frac{\alpha_1}{f_1} \ ,   &  \tau(f_3) = f_3 - \frac{\alpha_1}{f_1}   \ .  
\end{array}   
$$
(Other authors prefer to introduce three further symmetries 
$\tau_1=\tau$, $\tau_2=\sigma\tau\sigma^2$, $\tau_3=\sigma^2\tau\sigma$.)
The generators $\sigma$ and $\tau$ satisfy the relations
$$  \sigma^3 = \tau^2 = (\tau \sigma \tau \sigma^2)^3 = I  \ .  $$ 
The infinite group generated by $\sigma$ and $\tau$ is known as the extended affine Weyl group of type $A_2^{(1)}$ \cite{noumi1998affine,noumi1999symmetries}. 
The action of the group on the space of parameters generates the entire parameter space from a single triangular cell in Figure 1. Thus, in principle it
suffices to know the solutions of  $sP_\mathrm{IV}$ just in the case, say,  that all the $\alpha_i$ are non-negative. However, since the transformation $\tau$
can add or remove singularities, it is still important to understand the qualitative behavior of solutions for all parameter values. 

\section{Global Solutions of $sP_\mathrm{IV}$}

From what we have written above concerning the Poincar\'e compactification of $sP_\mathrm{IV}$, it is clear that global solutions of $sP_\mathrm{IV}$,
with no singularities on the entire real axis, correspond to orbits of the compactification going from any of the points $B_1^-,B_2^-, B_3^-,C^-$ to
any of the points $B_1^+,B_2^+, B_3^+,C^+$.  From Table \ref{tab:excls} we see that transitions from $C^-$ to $C^+$ are only permitted in the case that $\alpha_1,\alpha_2,\alpha_3$
are all positive.  By considering the signs of the solutions near the various points (using equations (16)--(17) in 
\cite{SchiffTwiton}) it is straightforward to determine which transitions 
are permitted and which are prohibited between $B$ and $C$ type points. See 
Table \ref{tab:excls2}. 

\begin{table}
	\begin{center}
			\begin{tabular}{c|cccc}
				$+++$   & $C^+$  & $B_1^+$ & $B_2^+$ &  $B_3^+$  \\ \hline
				$C^-$   &  1,2,3 & 1,2  & 2,3  & 1,3  \\
				$B_1^-$ & 1,2   &  \text{\sffamily X} & 2  & 1 \\
				$B_2^-$ & 2,3  & 2  & \text{\sffamily X}  & 3 \\ 
				$B_3^-$ & 1,3  & 1  & 3  &  \text{\sffamily X}
			\end{tabular}
		\end{center}
		\begin{center}
	\begin{tabular}{c|cccc}
				$++-$   & $C^+$  & $B_1^+$ & $B_2^+$ &  $B_3^+$  \\ \hline
				$C^-$   & \text{\sffamily X}  & \text{\sffamily X}   & 2  & \text{\sffamily X}  \\
				$B_1^-$ &  \text{\sffamily X} & \text{\sffamily X}  &  2 &\text{\sffamily X}  \\
				$B_2^-$ &  2 & 2  & \text{\sffamily X}  &  \\ 
				$B_3^-$ &\text{\sffamily X}   & \text{\sffamily X}  &   &  \text{\sffamily X}
			\end{tabular}
			~~
				\begin{tabular}{c|cccc}
				$-++$   & $C^+$  & $B_1^+$ & $B_2^+$ &  $B_3^+$  \\ \hline
				$C^-$   &  \text{\sffamily X} & \text{\sffamily X}  & \text{\sffamily X}  & 3 \\
				$B_1^-$ & \text{\sffamily X}  & \text{\sffamily X}  &\text{\sffamily X}   &  \\
				$B_2^-$ &  \text{\sffamily X} & \text{\sffamily X}  & \text{\sffamily X}  & 3 \\ 
				$B_3^-$ & 3  &   & 3  &  \text{\sffamily X}
			\end{tabular}
		   ~~
		        \begin{tabular}{c|cccc}
				$+-+$   & $C^+$  & $B_1^+$ & $B_2^+$ &  $B_3^+$  \\ \hline
				$C^-$   & \text{\sffamily X}  & 1  & \text{\sffamily X}  & \text{\sffamily X} \\
				$B_1^-$ & 1  & \text{\sffamily X}  &   & 1 \\
				$B_2^-$ & \text{\sffamily X}  &   &  \text{\sffamily X} & \text{\sffamily X} \\ 
				$B_3^-$ & \text{\sffamily X}  & 1  & \text{\sffamily X}  &  \text{\sffamily X}
			\end{tabular}
		\end{center}
		\begin{center}
			\begin{tabular}{c|cccc}
				$--+$   & $C^+$  & $B_1^+$ & $B_2^+$ &  $B_3^+$  \\ \hline
				$C^-$   & \text{\sffamily X}  &\text{\sffamily X}   & \text{\sffamily X}  & \text{\sffamily X} \\
				$B_1^-$ & \text{\sffamily X}  & \text{\sffamily X}  &\text{\sffamily X}   &  \\
				$B_2^-$ & \text{\sffamily X}  & \text{\sffamily X}  &\text{\sffamily X}   &\text{\sffamily X}  \\ 
				$B_3^-$ & \text{\sffamily X}  &   & \text{\sffamily X}  & \text{\sffamily X} 
			\end{tabular}
			~~
				\begin{tabular}{c|cccc}
				$+--$   & $C^+$  & $B_1^+$ & $B_2^+$ &  $B_3^+$  \\ \hline
				$C^-$   &  \text{\sffamily X} & \text{\sffamily X}  & \text{\sffamily X}  &  \text{\sffamily X}\\
				$B_1^-$ &  \text{\sffamily X} & \text{\sffamily X}  &   & \text{\sffamily X} \\
				$B_2^-$ &  \text{\sffamily X} &   &  \text{\sffamily X} &\text{\sffamily X}  \\
				$B_3^-$ &  \text{\sffamily X} & \text{\sffamily X}  & \text{\sffamily X}  &  \text{\sffamily X}
			\end{tabular}
			~~
				\begin{tabular}{c|cccc}
				$-+-$   & $C^+$  & $B_1^+$ & $B_2^+$ &  $B_3^+$  \\ \hline
				$C^-$   &\text{\sffamily X}   &\text{\sffamily X}   &\text{\sffamily X}   & \text{\sffamily X} \\
				$B_1^-$ & \text{\sffamily X}  &\text{\sffamily X}   & \text{\sffamily X}  & \text{\sffamily X} \\
				$B_2^-$ &\text{\sffamily X}   & \text{\sffamily X}  & \text{\sffamily X}  &  \\ 
				$B_3^-$ &\text{\sffamily X}   &\text{\sffamily X}   &   & \text{\sffamily X} 
			\end{tabular}
		\end{center}
	\caption{
          Excluded transitions:
           Top line: $\alpha_1,\alpha_2,\alpha_3>0$.
           Middle line:  
			$\alpha_1,\alpha_2>0$, $\alpha_3<0$ (left) ; 
			$\alpha_2,\alpha_3>0$, $\alpha_1<0$ (middle) ;
		       $\alpha_3,\alpha_1>0$, $\alpha_2<0$ (right).
           Bottom line: 
			$\alpha_1,\alpha_2<0$, $\alpha_3>0$ (left) ; 
			$\alpha_2,\alpha_3<0$, $\alpha_1>0$ (middle) ;
	   $\alpha_3,\alpha_1<0$, $\alpha_2>0$ (right). {\sffamily X} indicates an excluded transition. 
	   For  permitted transitions a list of numbers is given, showing which of the functions $f_1,f_2,f_3$ change sign in the course of the transition.
	      } \label{tab:excls2}
\end{table}


To show the existence of all the permitted transitions, we consider the equatorial plane of the Poincar\'e compactification.
The stable manifolds of the points $B_1^+,B_2^+,B_3^+$ intersect the equatorial plane transversely, and can only reach the boundary at one of the $A^-$ points.
They divide the equatorial plane into regions of points on orbits tending to the four points $A_1^+,A_2^+,A_3^+,C^+$. In Figure \ref{fig:awesome} we show this
division (computed numerically) in three cases. In the $+++$ case, from Table 1, the neighborhood of each of the $A^-$ points can be divided into $3$ regions
and no more than $3$ regions, corresponding to the $3$ possible ``destinations'' of orbits starting at any of the $A^-$ points. Thus two of the three stable
manifolds must meet each of the $A^-$ points and we have the triangular configuration shown.  In the $++-$ case, there can be at most two regions in the neighborhood
of one of the $A^-$ points, at most three at one of the others, and possibly all four at the last. The configuration shown is clearly the only option.
In the $+--$ case, two points can have at most two regions in their neighborhood, and thus only one stable manifold can reach these points. The other point
can have up to four regions in its neighborhood, and thus one of the stable manifolds must form a loop to meet this point twice. Note that it is possible that the loop might be between the other two stable manifolds, or between one of them and the boundary, as in the third image in Figure \ref{fig:awesome}. 

\begin{figure}
  \includegraphics[width=5.9cm]{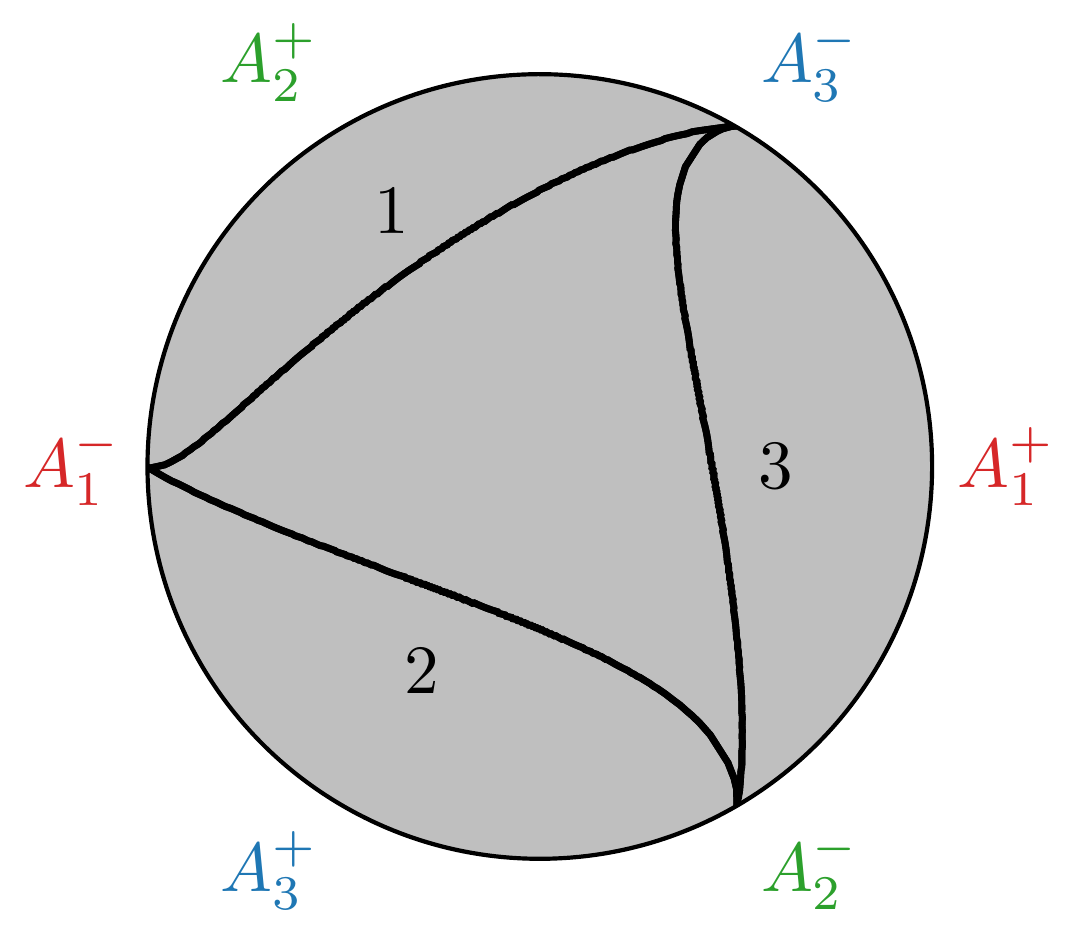}
  \includegraphics[width=5.9cm]{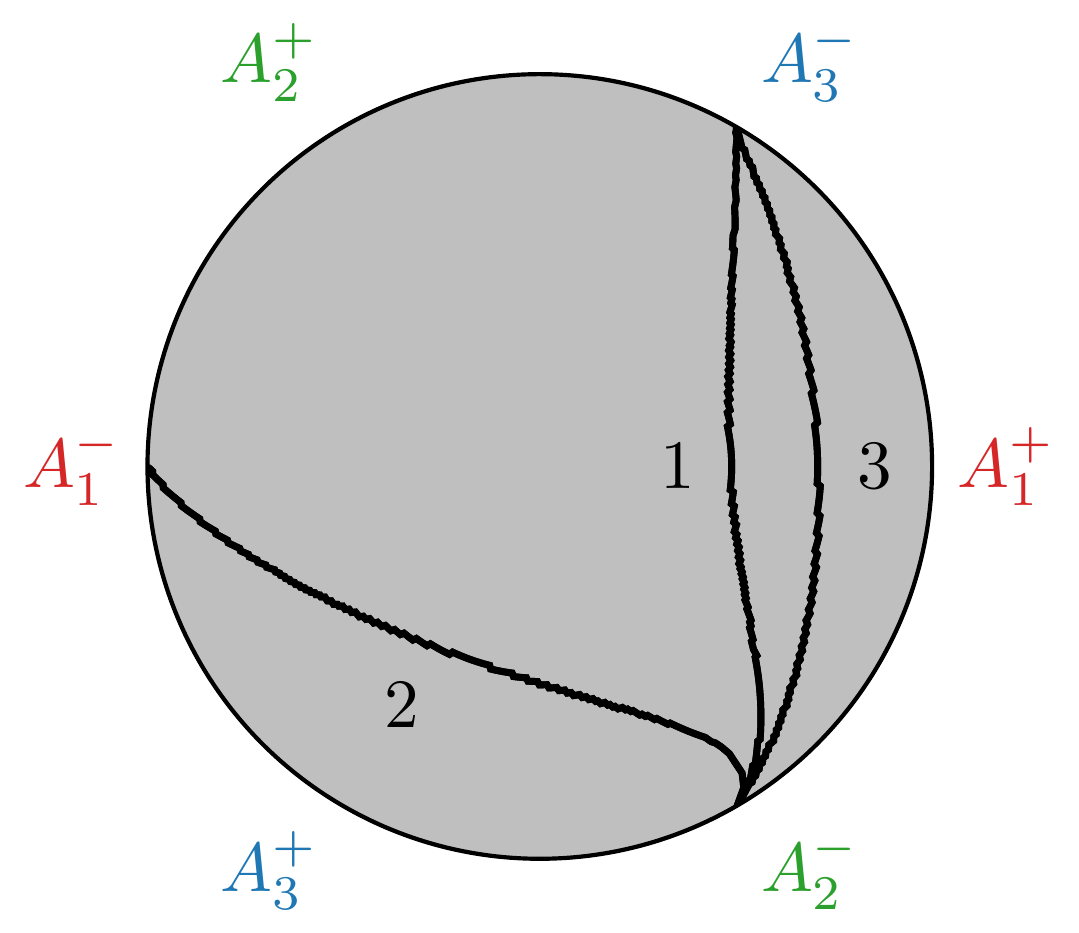}
   \includegraphics[width=5.9cm]{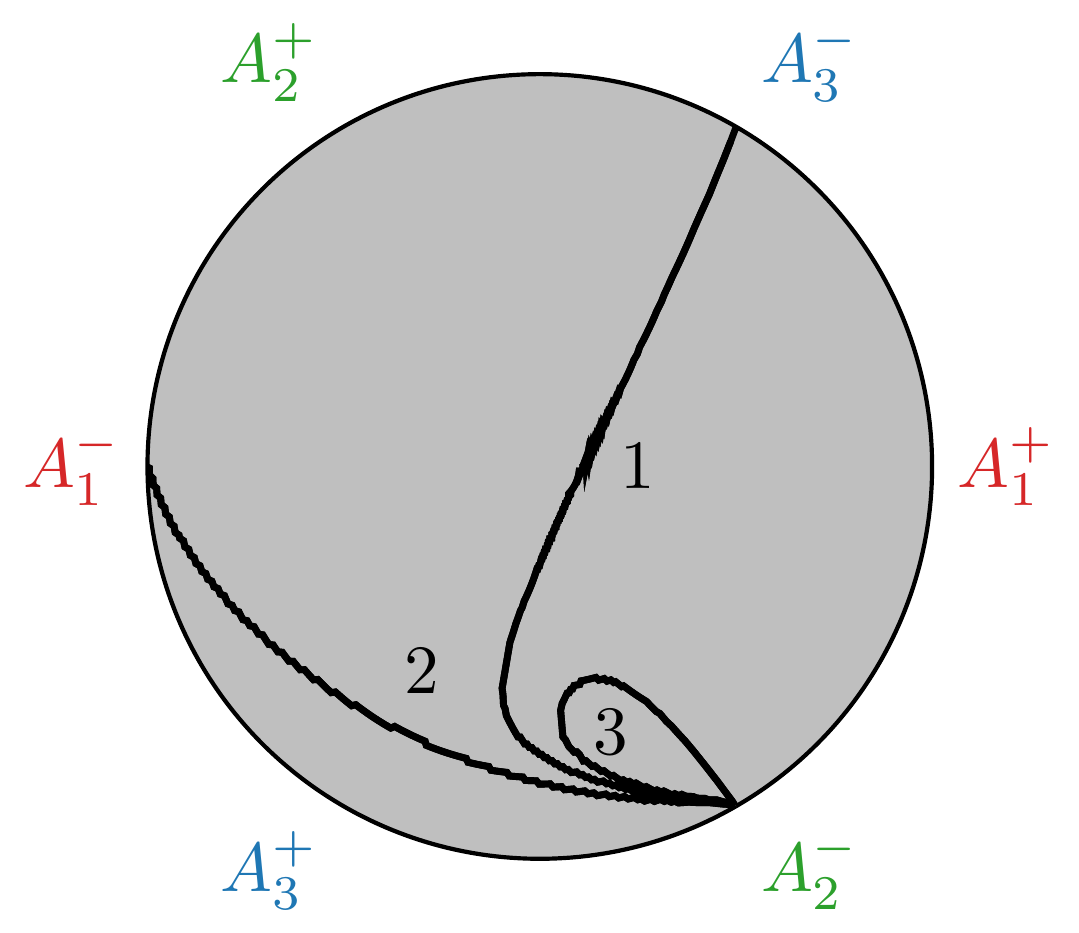}
  \caption{The equatorial plane is divided up by the stable manifolds of $B_1^+$,$B_2^+$,$B_3^+$  into basins of attraction of 
  $A_1^+,A_2^+,A_3^+,C^+$. Three possible divisions are shown, for the $+++$ case (left), the $++-$ case (middle) and the $+--$ case (right). The black curves
  are labelled to show which are the stable manifolds of $B_1^+,B_2^+,B_3^+$.  
  }  \label{fig:awesome} 
\end{figure}

\begin{figure}[t]
    \centering
    \includegraphics[width=0.6\textwidth]{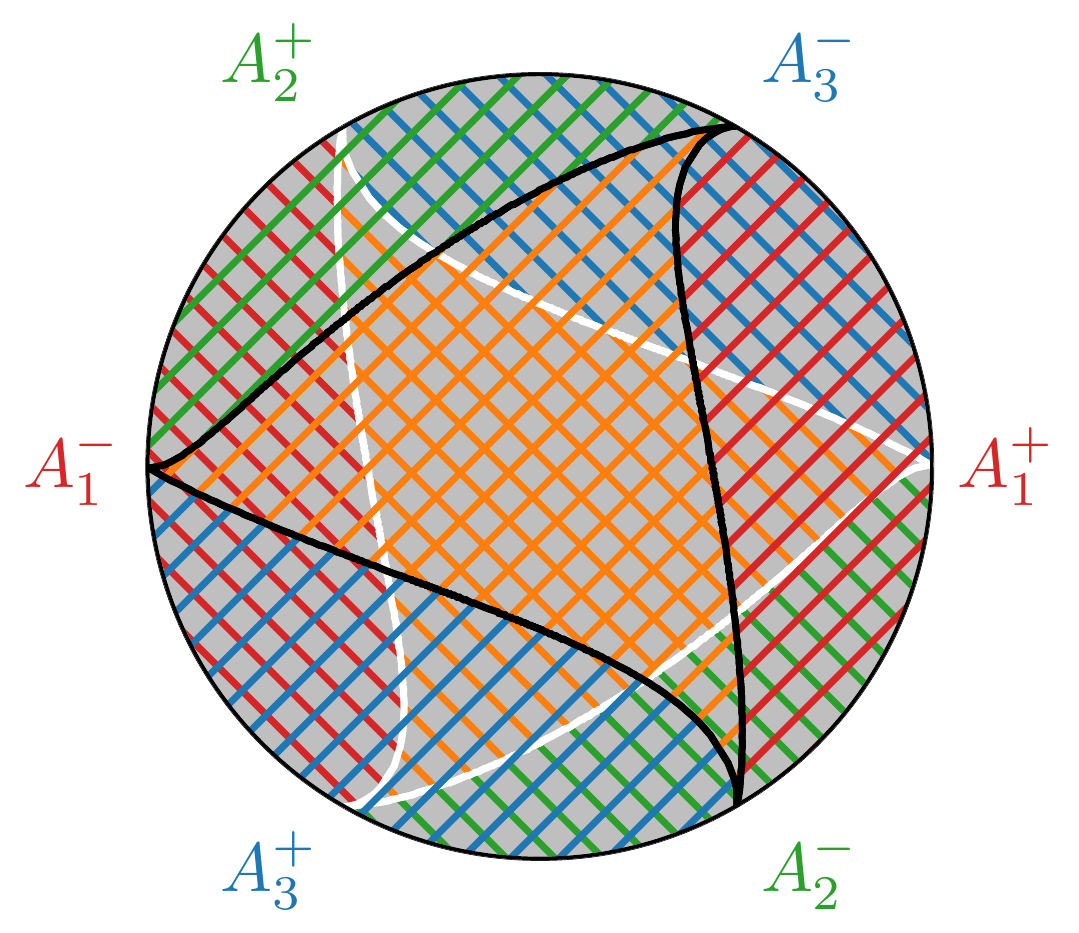}
    \caption{The equatorial plane with parameter values $\alpha_1 = 0.2, \alpha_2=0.3, \alpha_3 =0.5$ ($+++$ case). The black curves are the stable manifolds of the $B^+$ points, the white curves are the unstable manifolds of the $B^-$ points. The color of the positively-sloped hatching in a region shows the limit of orbits in this region as $t \rightarrow +\infty$: red denotes $A_1^+$, green denotes $A_2^+$, blue denotes $A_3^+$, orange denotes $C^+$. Similarly the color of the negatively-sloped hatching shows the limit as $t\rightarrow -\infty$.}
     \label{fig:ppp}
\end{figure}

To establish the existence of orbits going from  one of the $B^-$ or $C^-$ points to one of the $B^+$ or $C^+$ points we consider the division of the
equatorial plane by both the stable manifolds of the $B^+$ points and the unstable manifolds of the $B^-$ points, the latter being obtained by a
half turn from the former due to the $f(x) \rightarrow -f(-x)$ symmetry
of $sP_\mathrm{IV}$. See Figures  \ref{fig:ppp},\ref{fig:ppm},\ref{fig:pmm}
for the $+++$, $++-$ and $+--$ cases respectively.
In the $+++$ case it is clear that the rotated copy of each of the stable manifolds of the $B^+$ points (i.e. the unstable manifold of the corresponding $B^-$ point) must intersect the stable manifolds of the
other two $B^+$ points. It follows, by simple topological arguments, that in the $+++$ case there must be at least one open region in the disk corresponding to
solutions going from $C^-$ to $C^+$;
there must be at least one curve segment corresponding to solutions going from $C^-$ to any of the $B^+$ points and from any of the $B^-$ points to $C^+$; and
there must be at least one point corresponding to an orbit going from any of the points $B_i^-$ to the points $B_j^+$ with $j\not=i$.
In the $++-$ case we obtain (at least) two curve segments corresponding to solutions going from $C^-$ to one of the $B^+$ points (as allowed by the rules in Table 2), and
from one of the $B^-$ points to $C^+$. In addition, there are at least four $B$ to $B$ solutions. In the $+--$ case we obtain (at least) two $B$ to $B$ solutions.  
In short: solutions exhibiting every transition allowed by Table  \ref{tab:excls2} appear, with the expected number of parameters ($2$ for $C$ to $C$, $1$ for $C$ to $B$ and $B$ to $C$, and isolated solutions for $B$ to $B$). This is the result on nonsingular solutions described in the Introduction. The only assumption that has been made on the parameters in reaching the result is that none of the $\alpha_i$ vanish.  There is extensive discussion in the literature
\cite{bassom1992integral, bassom1993numerical, reeger2013painleve,  reeger2014painleve}
of real solutions of  
($P_\mathrm{IV}$) in the case $\beta=0$, which is precisely the case that we have excluded. However, there are clear similarities between the existing results in the case $\beta=0$ and our results, presumably reflecting the fact that some 
properties persist in an appropriate limit as one of the $\alpha_i$ tends to zero. 
\def\d{14pt} 

\begin{figure}[!h]
    \centering
    \includegraphics[width=0.6\textwidth]{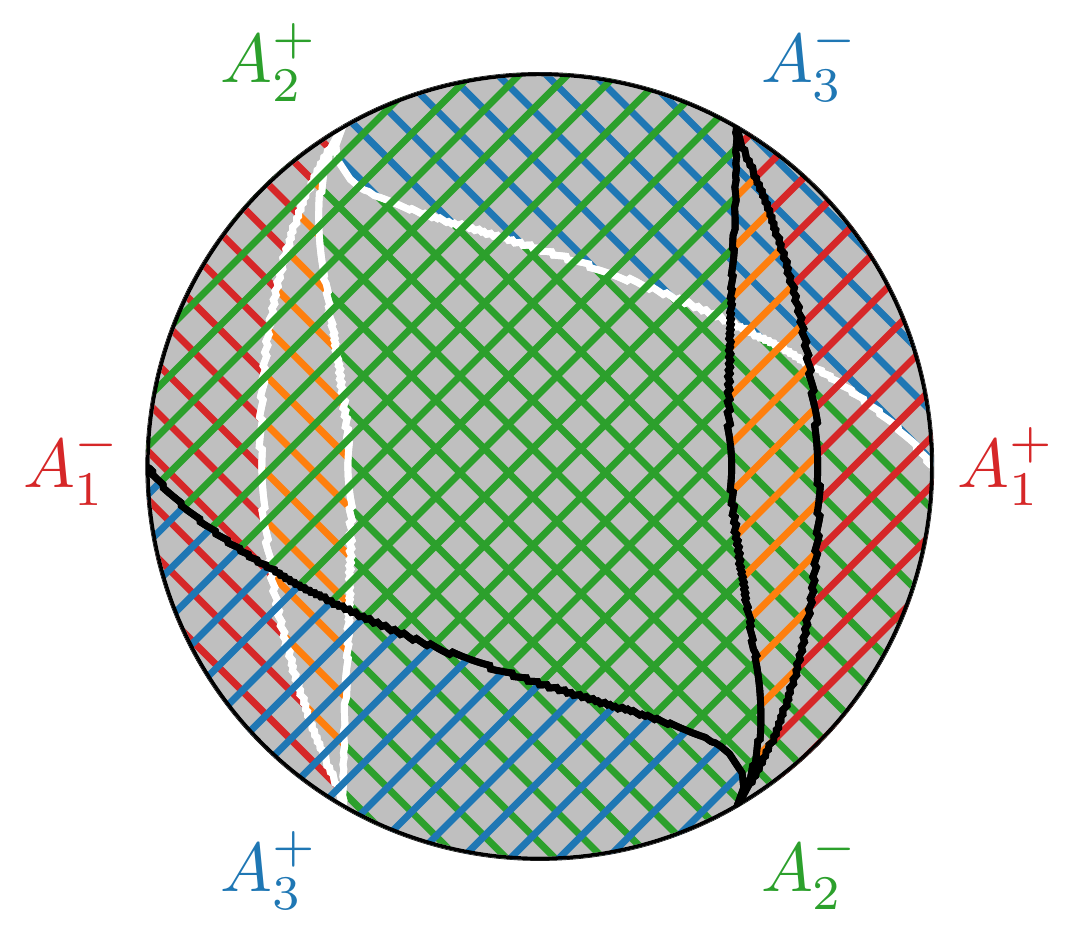}
    \caption{The equatorial plane with parameter values $\alpha_1 = 0.5, \alpha_2=0.7, \alpha_3 =-0.2$ ($++-$ case). Coloring as in Figure 
    \ref{fig:ppp}. } 
     \label{fig:ppm}
\end{figure}

\begin{figure}[!h]
    \centering
    \includegraphics[width=0.6\textwidth]{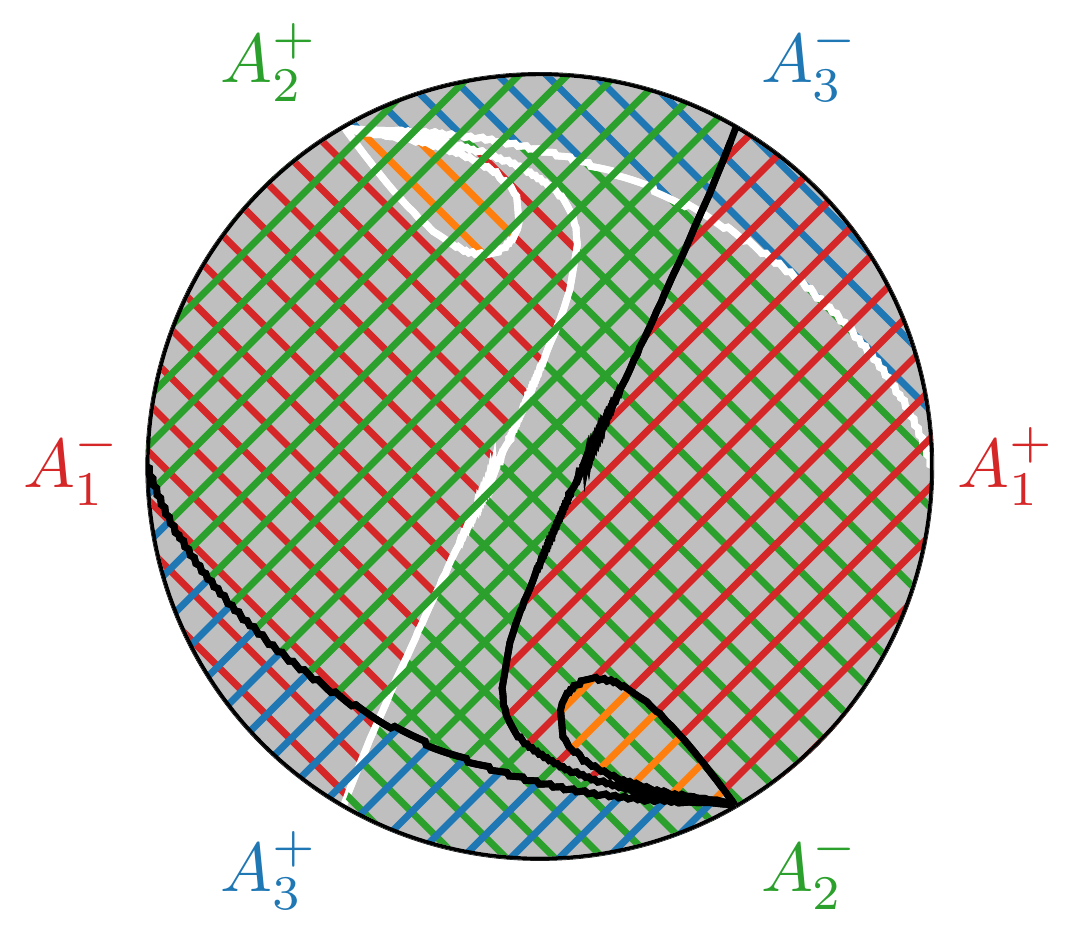}
    \caption{The equatorial plane with parameter values $\alpha_1 = 1.1, \alpha_2=-0.03, \alpha_3 =-0.07$ ($+--$ case).  Coloring as in Figure 
    \ref{fig:ppp}. }
      \label{fig:pmm}
\end{figure}

\def\r{\textwidth/2.01} 

The $B$ to $B$ solutions are of some significance. At any of the $B$ points, one of the components $f_1,f_2,f_3$ diverges, but the others tend to zero.
Thus one component of a $B_i^-$ to $B_j^+$ solution, with $i\not= j$, 
gives a solution of $P_\mathrm{IV}$ that is not only nonsingular on the entire real axis, but also tends to
$0$ as $x\rightarrow \pm \infty$.  Our results give methods for searching for these solutions, as they sit on the intersection of one $B^+$ stable manifold
and one $B^-$ unstable manifold, thus corresponding to an initial value at which both the $x\rightarrow +\infty$ and the $x\rightarrow -\infty$ asymptotics 
changes. We show some relevant numerical results in Appendix B. 
Figures \ref{fig:pppzoom},\ref{fig:+++BtoB} illustrate in the $+++$ case. Figures  \ref{fig:ppmzoom},\ref{fig:++-BB1},\ref{fig:ppmzoom2},\ref{fig:++-BB2} illustrate for the two different types of intersection point that occur in the $++-$ case.
And Figures \ref{fig:pmmzoom},\ref{fig:+--BB1} illustrate in the $+--$ case. 
Note that in certain cases the $B$ to $B$ solutions have the further property of having no zeros on the entire real axis. 

Finally in this section, we show some numerical results on the shape of the $C^-$ to $C^+$ region that exists in the $+++$ case. The numerical 
evidence we have points to there being only a single open region of such solutions in the space of initial data. In Figure   \ref{fig:sel} we plot the boundary
of the relevant region in the space of initial data, for a variety of parameter values. As expected, we observe that as any of the parameters $\alpha_i$ gets smaller, the area of the region contracts. 


\begin{figure}[!h]
    \centering
    \includegraphics[width=\textwidth]{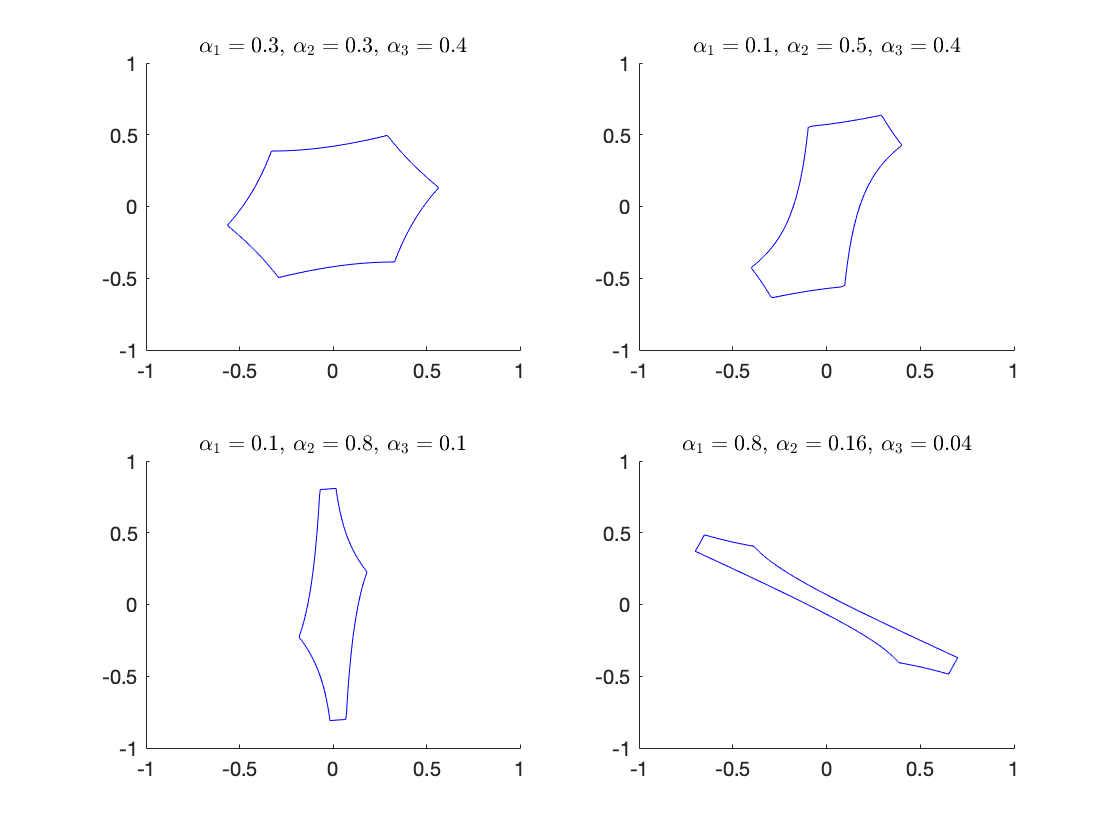}
    \caption{The region of $C$ to $C$ solutions in the space
    of initial values, for different sets of parameters in the $+++$ case. 
    The horizontal axis is $f_1(0)$, the vertical axis is $\frac{f_2(0)-f_3(0)}{\sqrt{3}}$ 
    and $f_1(0)+f_2(0)+f_3(0)=0$.}  \label{fig:sel}
\end{figure}

\section{Solutions with Poles and Allowed Pole Sequences}

In this section we consider solutions of $sP_\mathrm{IV}$ that have singularities. There clearly are four possibilities: a solution could have a
finite sequence of singularities, or a singly infinite sequence with no singularities for $x$ less than a certain finite value,
or a singly infinite sequence with no singularities for $x$ more  than a certain finite value,
or a doubly infinite sequence. In the first  case, we need to specify the asymptotics of the solution as $x\rightarrow\pm\infty$,
in the second case as $x\rightarrow -\infty$, and in the third case as $x\rightarrow +\infty$. We restrict ourselves to the generic case
of type $C$ asymptotic behavior. The discussion can be extended to include type $B$ asymptotic behavior as well,  but we do not pursue this here. 

In the obvious manner we associate with each solution of this type a symbol sequence specifying the singularities and the asymptotic behavior.
In the case of a finite sequence of singularities, the sequence begins and ends with $C$ and has a finite sequence of the symbols $A_1,A_2,A_3$
between the two $C$s. In the singly infinite case the sequence will begin or end with $C$, followed or preceded by an infinite sequence of $A$s.
In the doubly infinite case the sequence just consists of $A$s.  Table \ref{tab:excls} shows that certain symbols cannot follow certain other symbols, 
depending on the signs of the parameters $\alpha_i$.
We recall that these forbidden transitions are obtained by considering the signs of $f_1,f_2,f_3$ near the various singularities and in the appropriate asymptotic regimes
(see  equations (15) and (16) in \cite{SchiffTwiton}),  and using the fact that the sign of $\alpha_i$ determines the sign of $f_i'$ at a
regular zero of $f_i$ (that is, at a zero where all components of the solution are nonsingular); thus between any two singularities, there can be at most one
regular zero of each of the $f_i$, and the change in sign of $f_i$ at such a zero can only be in a specific direction.
In the cases of permitted transitions, Table \ref{tab:excls} also  shows which of the functions $f_1,f_2,f_3$ have zeros (though note that the order of these zeros is
not determined). 

We now prove the following result:
\begin{theorem*}
In the case $\alpha_1,\alpha_2,\alpha_3>0$,
\begin{enumerate}
\item  The only permitted finite singularity sequence is $CC$. 
\item  The only permitted singly infinite singularity sequences are 
\begin{equation}
  \begin{array}{r}
C  A_2 A_1 A_3 A_2 A_1 A_3 A_2A_1   \ldots  \\ 
  C     A_1 A_3 A_2 A_1 A_3 A_2A_1   \ldots  \\
    C        A_3 A_2 A_1 A_3 A_2A_1   \ldots   
   \end{array}  \label{eq:si1}
\end{equation}
and
\begin{equation} \begin{array}{l}
     \ldots A_1A_2 A_3 A_1 A_2 A_3A_1 A_2 C    \\ 
     \ldots A_1A_2 A_3 A_1 A_2 A_3A_1 C   \\ 
     \ldots A_1A_2 A_3 A_1 A_2 A_3 C     
     \end{array}  \label{eq:si2}  \end{equation}
\item  The only permitted doubly infinite singularity sequences are 
 \begin{equation} \begin{array}{c}
        \ldots.  A_1A_2A_3 A_1A_2A_3 ~~~~~~~  A_1  ~~~~~~~~A_3A_2A_1A_3A_2A_1\ldots  \\
        \ldots.  A_1A_2A_3 A_1A_2A_3 ~~~~ A_1A_2A_1 ~~~~  A_3A_2A_1A_3A_2A_1\ldots     \\
        \ldots.  A_1A_2A_3 A_1A_2A_3  A_1A_2A_3A_2A_1     A_3A_2A_1A_3A_2A_1\ldots     \\ 
\end{array}    \label{eq:si3}\end{equation}
  or doubly infinite repetitions of the subsequences $A_1A_2A_3$ or $A_3A_2A_1$.  
\end{enumerate}

\end{theorem*}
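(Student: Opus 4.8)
The plan is to combine the symbolic transition rules of Table~\ref{tab:excls} with two further ingredients: a bookkeeping of the signs of $f_1,f_2,f_3$ along a trajectory, and the B\"acklund transformations $\tau_1,\tau_2,\tau_3$ of Section~2. First I would attach to each fixed point $C^{\pm},A_i^{\pm}$ the sign pattern $(\operatorname{sgn}f_1,\operatorname{sgn}f_2,\operatorname{sgn}f_3)$ realised as the point is approached --- at the $A_i$ points one entry is the sign of the vanishing component and the other two are the signs of the residues of the two poles --- either by reading it off equations (15)--(16) of \cite{SchiffTwiton} or by bootstrapping it from the ``which $f_i$ change sign'' data in Table~\ref{tab:excls}. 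One finds $C^-\sim(-,-,-)$, $C^+\sim(+,+,+)$, each $A_i^-$ has a single $+$ and each $A_i^+$ a single $-$, and crossing an $A_i$ singularity flips all three signs. In the case $\alpha_1,\alpha_2,\alpha_3>0$ a regular zero of $f_i$ is a simple zero at which $f_i$ passes from $-$ to $+$ (since $f_i'=\alpha_i>0$ there), so between consecutive singularities each $\operatorname{sgn}f_i$ is nondecreasing. Tracking this cumulatively already reproduces the forbidden $A_iA_i$ transitions of Table~\ref{tab:excls} --- so in an admissible singularity word consecutive indices differ, i.e. the indices perform a walk on $\mathbf{Z}/3$ with steps $\pm1$ --- and confirms that $C$ can appear only at the two ends.

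The heart of the matter is the reduction of singularity words by B\"acklund transformations. The involution $\tau_i$ fixes $f_i$, sends $f_{i\pm1}\mapsto f_{i\pm1}\pm\alpha_i/f_i$, and maps the parameters by $\alpha_i\mapsto-\alpha_i$, $\alpha_j\mapsto\alpha_j+\alpha_i$ ($j\ne i$). Because the residue of $f_{i\pm1}$ at an $A_i$ singularity is exactly $\mp\alpha_i/f_i'$, the transformation $\tau_i$ deletes every $A_i$ singularity and, conversely, creates a new $A_i$ singularity at every regular zero of $f_i$; and Table~\ref{tab:excls} records precisely which transitions contain a regular zero of $f_i$. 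Suppose a finite word $C A_{i_1}\cdots A_{i_k}C$ with all $\alpha_i>0$ and $k\ge 1$ occurred. From the $+++$ block the flanking transitions $C^-\!\to\!A_{i_1}^+$ and $A_{i_k}^-\!\to\!C^+$ each contain a regular zero of the corresponding component, so applying $\tau_{i_1}$ deletes the $A_{i_1}$ singularities while producing new ones; following the parameter signs and the forbidden transitions under repeated B\"acklund moves gives a contradiction, the cleanest formulation being that the $+++$ alcove is a fundamental domain for the extended affine Weyl group while a globally nonsingular $C$-to-$C$ solution requires all $\alpha_i>0$ (Table~\ref{tab:excls2}) --- so $k=0$. (For instance, if $CA_1C$ occurred for $\alpha_i>0$, then since $C^-\!\to\!A_1^+$ and $A_1^-\!\to\!C^+$ both contain a zero of $f_1$, applying $\tau_1$ would delete the middle $A_1$ and create two new ones, i.e. produce the forbidden word $CA_1A_1C$.) This proves (1).

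For the singly and doubly infinite cases one can no longer reduce all the way to a global solution, so here I would lean on the finer orbit restrictions of \cite{SchiffTwiton} together with the symmetry group. At each $A_i^{+}$ the incoming orbits fall into at most three sectors, one for each permitted predecessor among $C^-,A_{i-1}^-,A_{i+1}^-$, and passage through the singularity matches these bijectively with the three outgoing sectors at $A_i^{-}$ (to $C^+,A_{i-1}^+,A_{i+1}^+$); the matching is pinned down by the $\tau_i$-equivariance and the sign bookkeeping above. The upshot is that a step of the $\pm1$-walk on $\mathbf{Z}/3$ cannot reverse its sense (``$+1$'' versus ``$-1$'' mod $3$) except through a palindromic turnaround, that $C$ can be attached at the left end only when the adjacent steps decrease and at the right end only when they increase, and that a ``decreasing-to-increasing'' reversal is impossible. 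Feeding these restrictions into the combinatorics of $\pm1$-walks on $\mathbf{Z}/3$ --- together with the $\sigma$-relabelling --- leaves exactly the lists \eqref{eq:si1}, \eqref{eq:si2} and \eqref{eq:si3}, which is (2) and (3).

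I expect the main obstacle to be precisely this last step: establishing that the local data at an $A_i$ singularity forbids the walk from turning from decreasing back to increasing, and that $C$ can only be attached in the compatible sense. The naive analysis (Table~\ref{tab:excls} plus cumulative signs) is genuinely insufficient, since it is symmetric between the two walk directions and permits $CA_iC$; one really must use the geometry of the stable and unstable manifolds of the $B^{\pm}$ points in the equatorial disk (Figures~\ref{fig:awesome}--\ref{fig:pmm}) and the B\"acklund equivariance to break that symmetry. A secondary technical point to be careful about is verifying that $\tau_i$ affects only the $A_i$ singularities of a solution and introduces new ones exactly at regular zeros of $f_i$, with no spurious interaction with the $A_{i\pm1}$ singularities --- this is routine from the local form of $\tau_i$ but must be checked to make the reduction argument of the finite case rigorous.
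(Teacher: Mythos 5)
You have the paper's key mechanism already in hand --- the observation that applying $\tau$ to a solution with word $CA_1C$ produces the block $A_1A_1$, which is forbidden in the new $-++$ parameter regime --- but you stop short of exploiting it, and the two places where you substitute something else are exactly where your argument has gaps. For the finite case you retreat to a ``fundamental domain'' reduction: apply B\"acklund moves until all singularities are gone, then invoke Table~\ref{tab:excls2}. The gap here is that each $\tau_i$ removes the $A_i$ singularities but simultaneously creates new ones at every regular zero of $f_i$, so it is not argued (and not obvious) that any group element removes \emph{all} singularities from a given finite word; without that, the fundamental-domain observation has nothing to apply to. More seriously, for the infinite cases you declare that the symbolic analysis is ``genuinely insufficient'' and that one must invoke the geometry of the stable and unstable manifolds via a sector-matching argument at the $A_i^{\pm}$ points --- an argument you do not carry out and yourself flag as the main obstacle. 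This is backwards: no geometric input beyond Table~\ref{tab:excls} is needed.

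The paper's proof is obtained by pushing your $CA_1C$ computation one step further. The same single application of $\tau$ that forbids $CA_1C$ also forbids any word containing $CA_1A_2$, $A_2A_1C$ or $A_2A_1A_2$, because in each of these Table~\ref{tab:excls} places a regular zero of $f_1$ on both sides of the $A_1$, so $\tau$ again produces two consecutive $A_1$'s with the new $\alpha_1<0$. Conjugating by $\sigma$ likewise forbids $CA_2C$, $CA_2A_3$, $A_3A_2C$, $A_3A_2A_3$ and $CA_3C$, $CA_3A_1$, $A_1A_3C$, $A_1A_3A_1$. These twelve excluded three-letter substrings, together with the exclusion of $A_iA_i$, are precisely the ``no decreasing-to-increasing turnaround'' and ``$C$ attaches only in the compatible sense'' rules you wanted from geometry: e.g.\ $A_2A_1A_2$ (the bad reversal at $A_1$) is forbidden while $A_3A_1A_3$ (the allowed turnaround appearing in \eqref{eq:si3}) is not on the list. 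Once these substrings are excluded, a purely mechanical extension argument --- a word starting $CA_1$ is forced to continue $CA_1A_3A_2A_1A_3A_2\ldots$ indefinitely, and similarly for the other cases --- yields all of (1)--(3). So the gap is not that your toolkit is too weak; it is that you did not apply the $\tau$-elimination to the three-letter $A$-substrings, and instead proposed an undeveloped geometric replacement for a step the symbolic argument already covers.
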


\begin{proof}  From Table \ref{tab:excls}, in the case that all the $\alpha_i$ are positive, 
a singularity of type $A_i$ cannot be followed by another singularity of type $A_i$.

To obtain further restrictions on the permitted sequence of singularities, we consider the action
of the symmetry group. Note that both the symmetries $\sigma$ and $\tau$ described in Section 2.2 preserve
asymptotic type $C$ behavior. The action of $\sigma$ maps singularities of type $A_i$ to singularities of type
$A_{i+1~{\rm mod}~3}$ and regular zeros of $f_i$ to regular zeros of $f_{i+1~{\rm mod}~3}$. A brief calculation shows that the action
of $\tau$ does not affect type $A_2$ and type $A_3$ singularities, but eliminates type $A_1$ singularities, leaving
regular  zeroes of $f_1$ at the points of singularity; at the same time it  creates new type $A_1$ singularities out of
regular zeros of $f_1$ (and this is the only way the action of $\tau$ can create new singularities). 
Recall also that $\tau(\alpha_1,\alpha_2,\alpha_3) = (-\alpha_1, \alpha_2+\alpha_1,\alpha_3+\alpha_1)$
so after the action of $\tau$ the new value of $\alpha_1$ is negative, and the new values of $\alpha_2,\alpha_3$ are still
positive.

Suppose that a solution with $\alpha_1,\alpha_2,\alpha_3>0$ has symbol sequence $CA_1C$. From Table \ref{tab:excls}
 we see that there
are zeros of $f_1$ both between the first $C$ and the $A_1$, and between the $A_1$ and the second $C$. Thus applying $\tau$ 
gives the symbol sequence $CA_1A_1C$. But now $\alpha_1<0$ and $\alpha_2,\alpha_3>0$, and we see from Table \ref{tab:excls}
 that in this
situation an $A_1$ singularity cannot follow another $A_1$ singularity. Thus we have a contradiction, and the symbol sequence
$CA_1C$ is not permitted when $\alpha_1,\alpha_2,\alpha_3>0$. Similar arguments eliminate any symbol sequence containing
any of the subsequences $CA_1A_2$, $A_2A_1C$ or $A_2A_1A_2$.  Application of $\tau$ to any of these will lead to two
consecutive $A_1$ singularities, which is not allowed. 

Similarly, applying $\tau_2 = \sigma^2 \tau \sigma$ (which maps the parameters to $\alpha_1+\alpha_2, -\alpha_2, \alpha_3+\alpha_2$, i.e.
to the $+-+$ case) eliminates the substrings $CA_2C$, $CA_2A_3$, $A_3A_2C$, $A_3A_2A_3$. 
and applying $\tau_3=\sigma \tau\sigma^2$ eliminates the substrings $CA_3C$, $CA_3A_1$, $A_1A_3C$, $A_1A_3A_1$. 
 
Let us now consider  a permitted sequence starting $CA_1$. The sequences  $CA_1C$, $CA_1A_2$ and $A_1A_1$  are not allowed, so the sequence must
in fact start with $CA_1A_3$. The sequences $A_1A_3A_1$, $A_1A_3C$ and $A_3A_3$ are not allowed, so the sequence  must in fact start $CA_1A_3A_2$.
Continuing, we reach the conclusion that the only permitted sequence starting $CA_1$ is the singly infinite sequence
$CA_1A_3A_2A_1A_3A_2A_1\ldots$.
Similarly we deduce that the only permitted sequence starting $CA_2$ is  $CA_2A_1A_3A_2A_1A_3A_2A_1\ldots$
and the only permitted sequence starting $CA_3$ is $CA_3A_2A_1A_3A_2A_1A_3A_2A_1\ldots$.

This proves the section of the theorem relating to sequences (finite or singly infinite) starting with $C$.
The section relating to singly infinite sequences ending in $C$ is proved similarly. For doubly infinite sequences, it is
straightforward to show that the subsequences $A_3A_1A_3$, $A_1A_2A_1$, $A_2A_3A_2$ have unique doubly infinite extensions, and
the only other possible doubly infinite sequences of $A$s that are not excluded are doubly infinite repetitions of one of the subsequences $A_1A_2A_3$
or $A_3A_2A_1$.  
\end{proof}

\begin{figure}[t]
	\centering
    \includegraphics[width=0.7\textwidth]{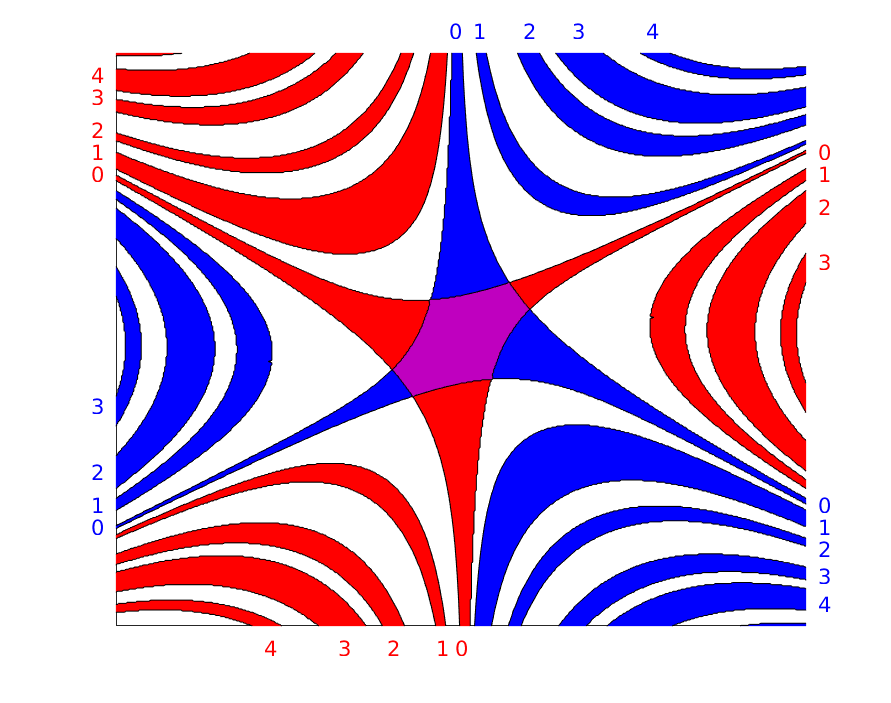}
	\caption{Types of solution for different initial conditions. Parameter values are  $\alpha_1=0.2,\alpha_2=0.3,\alpha_3=0.5$. The horizontal axis is $f_1(0)$, the vertical axis is $\frac{f_2(0)-f_3(0)}{\sqrt{3}}$, both ranging approximately from $-3$ to $3$. Purple region: nonsingular solutions. Blue regions: An infinite  number of singularities in $x<0$ and a finite number in $x>0$, with the number of singularities in $x>0$ marked for each region. Red regions: the same, reversed. White regions: an infinite sequence of singularities in both $x>0$ and $x<0$.}
	\label{fig:regions}
\end{figure}

 Numerical experiments show that in practice there exist solutions with no singularities, as we have already documented; there also exist solutions with 
 all the possible singly infinite singularity sequences, and solutions with the first 3 types of doubly infinite singularity sequence. We have not yet 
 found evidence of the last two possibilities, viz. doubly infinite repetitions of one
 the subsequences  $A_1A_2A_3$ or $A_3A_2A_1$. In  Figure 
 	\ref{fig:regions} we show numerical results for one specific choice of 
 	the parameters (using the numerical method explained in Appendix A for integrating through poles). For different choices of initial condition 
 	at $x=0$ we integrate up to $x=10$ and down to $x=-10$ and count the number of poles in $x>0$ and in $x<0$. For the purpose of the experiment any number of poles exceeding $10$ is considered to be infinite. We find ``bands'' in the
 	plane of initial values with $0,1,2,3,\ldots$ poles in $x>0$, and 
 	corresponding bands (related by the symmetry $f(x)\rightarrow -f(-x)$) with 
 	$0,1,2,3,\ldots$ poles in $x<0$. The only intersection of the bands 
 	is apparently a single region giving pole-free solutions. Between the 
 	bands we observe regions where there are presumably solutions with 
 	doubly infinite singularity sequences. In the regions corresponding to 
 	singly infinite singularity sequences we observe cases with the ``last'' 
 	singularity being of all $3$ possible types, as shown in 
 	(\ref{eq:si1})  and 	(\ref{eq:si2}); in the doubly infinite 
 	bands we observe cases with the ``central'' singularity being of all 
 	$3$ possible types given in  	(\ref{eq:si3}). 
 	However, we do not observe cases where the singularity sequence is a 
 	doubly infinite repetition of the subsequences $A_1A_2A_3$
 	or $A_3A_2A_1$. 

 	Many related plots 
 	can be found in the numerical work of Reeger and Fornberg
 \cite{reeger2013painleve,  reeger2014painleve}.  Much of Reeger
 and Fornberg's work relates to nongeneric parameter values, but  Figure 8 in  \cite{reeger2014painleve} relates to the case $\alpha_1=\alpha_2=\frac14$, $\alpha_3=\frac12$, and shows what 
 appears to be two regions in the space
 of initial values for which there is a pole-free solution. This
 arises as $u(0)$ and $u'(0)$ in Reeger and Fornberg's work correspond to 
 $f_1(0)$ and $-\frac12 + f_1(0)(f_2(0)-f_3(0))$ in our work. 
 	
The results for the case that $\alpha_1,\alpha_2,\alpha_3$ 	 are all 
positive can be generalized using the transformation group to appropriate results
for any generic choice of parameters. To see the effect of the transformation
$\tau$ on a particular singularity sequence (for a particular set of parameters), we use Table \ref{tab:excls} to locate the zeros of $f_1$ (inserting an appropriate symbol, say $Z_1$), then we delete the existing $A_1$s and replace the $Z_1$s by new $A_1$s. The effect of the transformation $\sigma$ is simply to cycle $A_1,A_2,A_3$. Clearly doubly infinite sequences remain doubly infinite, singly infinite sequences remain singly infinite, and finite sequences remain finite. Thus we arrive at the results stated in the Introduction:  {\em for any generic choice of the parameters, there exists a unique finite
sequence of singularities for which $sP_\mathrm{IV}$ has a two-parameter 
family of solutions with this singularity sequence}. The (singly and doubly) 
infinite sequences that are permitted in general also depend on the values of the parameters. For example, by application of $\tau$ to the sequence 
$$ \ldots A_1A_2A_3 A_1A_2A_3 ~~~~ A_1 ~~~~ A_3A_2A_1 A_3A_2A_1  \ldots $$ 
that is permitted in the $+++$ case, we obtain the same sequence with the ``central'' $A_1$ removed, i.e. 
$$ \ldots A_1A_2A_3 A_1A_2A_3 ~~~~~~~~ A_3A_2A_1 A_3A_2A_1  \ldots $$ 
(note that a repeated $A_3$ is allowed in the case $-++$). 
The two sequences consisting of doubly infinite repetitions of $A_1,A_2,A_3$ or 
$A_3,A_2,A_1$  seem to be allowed for arbitrary values of the parameters. 
They are evidently invariant under the action of $\sigma$, and in fact 
are also invariant under the action of $\tau$ except in the case $--+$. 
In this case both are transformed to the sequence which is a doubly infinite
repetition of the subsequence $A_2A_3$ (which seems to be allowed in the cases 
$+-+$ and $+--$, but we have not observed its existence).  

\section{Special Solutions} 

\subsection{Rational Solutions 1}

In this section we consider the rational solutions of   $sP_\mathrm{IV}$
that are equivalent to the so-called ``$-\frac23 z$ hierarchy'' of 
$P_\mathrm{IV}$ \cite{clarkson2003fourth}.   These occur when the parameters
take values at the centers of faces of the lattice in Figure \ref{fig:lattice}.
The fundamental example is the solution $f_1=f_2=f_3=\frac13 x$ that is obtained 
when $\alpha_1=\alpha_2=\alpha_3 =\frac13$.   Indeed the general solution of this kind is obtained by application of an arbitrary element of the symmetry group 
on this fundamental solution \cite{murata,noumi1999symmetries}. Thus for example, by applying $\tau\sigma^2\tau$ to the fundamental solution we obtain the solution 
$$
f_1 = \frac{x^2-3}{3x} \ , f_2 = \frac{x(x^2+3)}{3(x^2-3)} \ , 
   f_3 = \frac{x^4-6x^2-9}{3x(x^2-3)} \ , 
$$
for parameter values $\alpha_1=-\frac23,\alpha_2=\frac13,\alpha_3=\frac43$. 
In the context of this paper, this result is useful as by looking at the 
singularities of the rational solution we can determine the 
unique finite singularity sequence of solutions for {\em all} values of the 
parameters in the cell whose center gives the rational solution. In the case
of the example just given, there are singularities at $x=0,\pm \sqrt{3}$ and the 
sequence is $CA_1A_2A_1C$. This sequence can be read off from the group element 
that generated the solution. 
The fundamental solution has singularity 
sequence $CC$, applying $\tau$ gives the sequence $CA_1C$ (and parameter values $-++$), applying 
$\sigma^2$ gives the sequence $CA_2C$ (and parameter values $+-+$), and 
applying $\tau$ again gives the sequence $CA_1A_2A_1C$.  Note that in the context of
studies of the distribution of singularities of rational solutions  in the complex plane \cite{cfilipuk,cmilson}, no rules are currently known for evaluating the effect of the transformation $\tau$. 

We also note the following fact: If the action of the group element $g$ on the 
fundamental solution gives $f$ for parameter values $\alpha$, then using the 
obvious notation we have 
$$ ( g^{-1} f )_1 = ( g^{-1} f )_2 = ( g^{-1} f )_3 \ .  $$
This gives two polynomial equations that must be satisfied by the components of
the rational solution $f$, in addition to the constraint $f_1+f_2+f_3=x$. Thus, in
the above example, it is possible to check that 
\begin{eqnarray*}
9 f_1^2 f_2^2-9f_1^2 f_2 f_3+3f_1^2-18f_1f_2+6f_1f_3+8
    &=& 0 \\
-9f_1^3f_2 + 9f_1^2f_2f_3 + 6f_1f_2 - 6f_1f_3 - 4    &=& 0  
\end{eqnarray*}
As far as we are aware, these polynomial relations between the components
of a rational solution are new.

\subsection{Rational Solutions 2}
In this section we consider the rational solutions of   $sP_\mathrm{IV}$
that are equivalent to the so-called ``$-2 z$ hierarchy'' 
and ``$-\frac1{z} $ hierarchy''   of 
$P_\mathrm{IV}$ \cite{clarkson2003fourth}.   These occur when the parameters
take values at the vertices of the lattice in Figure \ref{fig:lattice}.
The simplest example is the solution $f_1=x$, $f_2=f_3=0$ that is obtained 
when $\alpha_1=1$, $\alpha_2=\alpha_3 =0$.  A complication arises that 
does not occur for the first set of rational solutions: Applying the transformation $\tau$ requires that $f_1$ should be nonzero, and $f_1$ can be zero
in the case that $\alpha_1=0$. The general rational solution of the 
second type arises by application of a restricted set of group elements to 
the fundamental solution, those that avoid generating the parameter value
$\alpha_1=0$ as an intermediate step. However, there is no shortage 
of group elements with this property. The resulting solutions also 
satisfy polynomial identities, in this case obtained from the conditions 
$$ ( g^{-1}f )_2 =  ( g^{-1}f )_3 = 0 \ .  $$
As an example, application of the group element  
$\sigma\tau\sigma^2\tau \sigma^2\tau\sigma\tau\sigma\tau $  to the fundamental
solution gives 
$$
f_1 = \frac{2x(x^2-3)(x^2+1) }{(x^2-1)(x^4+3)}\  ,  
f_2 = \frac{-2x(x^2-1)(x^2+3)}{(x^2+1)(x^4+3)}\ ,   
f_3 = \frac{x(x^4+3)}{(x^2-1)(x^2+1)}\ ,
$$
for parameter values $\alpha_1=\alpha_2=2,\alpha_3=-3$. These functions obey
the polynomial identities 
\begin{eqnarray*}
  f_1^2f_2f_3^2 + f_1^2f_3 - 5f_1f_2f_3 - f_1f_3^2 - 3f_1 + 6f_2 + 2f_3  &=& 0\ , \\
  f_1f_2^2f_3^2 + 5f_1f_2f_3 - f_2^2f_3 + f_2f_3^2 + 6f_1 - 3f_2 + 2f_3  &=& 0\ .
\end{eqnarray*}
Although we have not discussed solutions with singularities with $B$ type asymptotics in this paper, we mention that this solution has singularity 
sequence $B_3A_2A_2B_3$. 

It is straightforward to establish  that there is a rational solution  of  $sP_\mathrm{IV}$
with $f_1=0$ (and thus $\alpha_1=0$) for arbitrary integer values of 
$\alpha_2,\alpha_3$ with $\alpha_2+\alpha_3=1$.  For positive integer values of 
$\alpha_2$ the solution has
$$ f_2(x) = 
\frac
{\sqrt{2} \alpha_2 {\rm He}_{\alpha_2} \left( \frac{ix}{\sqrt{2}} \right) }
{i  {\rm He}_{\alpha_2}' \left( \frac{ix}{\sqrt{2}} \right)}\ ,  f_3 = x-f_2\ ,
$$
where ${\rm He}_n$ denotes the $n$th Hermite polynomial. This has singularity 
sequence $B_2B_2$ for odd values of $\alpha_2$  and $B_2A_1B_2$ for even values
of $\alpha_2$.  
For nonpositive integer values of 
$\alpha_2$ the solution is 
$$ f_2(x) = 
-\frac
{  {\rm He}_{-\alpha_2}' \left( \frac{x}{\sqrt{2}} \right)}
{\sqrt{2} {\rm He}_{-\alpha_2} \left( \frac{x}{\sqrt{2}} \right) }
\ ,  f_3 = x-f_2 \ .
$$
This has singularity sequence  $B_3A_1\ldots A_1 B_3$ with $-\alpha_2$ 
successive 
singularities of type $A_1$. Note that since these solutions have $\alpha_1=0$
the rules of Table \ref{tab:excls} do not apply. However, since  
successive $A_1$ singularities are allowed in both the cases $+-+$ and $--+$, it
is not surprising that we also see this on the transition between them.  Note 
also that these solutions have all their singularities on the real axis, with no further poles in the complex plane. 

\subsection{Other Special Solutions}

In greater generality, whenever $\alpha_1=0$,  $sP_\mathrm{IV}$ has a 
1-parameter family of solutions with $f_1=0$. In this case the  $sP_\mathrm{IV}$
system reduces to the first order Riccati equation
$$  f_2' = f_2(x-f_2)+ \alpha_2\ , $$
which can be linearized. By application of suitable group elements, these 
solutions give rise to a 1-parameter family of solutions whenever any of 
the $\alpha_i$ takes an integer value.  In fact, these solutions (and the 
corresponding solutions of   $P_\mathrm{IV}$) can all be obtained as the 
solution of a first order differential equation. To see this, note that 
by application of the inverse group element to the solution, it must 
satisfy the single polynomial identity 
$$   ( g^{-1}f )_1 = 0 \ .  $$
Substituting 
$$
f_2 = \frac12 \left( x-f_1 + \frac{f_1'-\alpha_1}{f_1} \right)  \ , \qquad 
f_3 = \frac12 \left( x-f_1 - \frac{f_1'-\alpha_1}{f_1} \right)
$$
gives a first order differential equation for $f_1$.
Thus, for example, by applying the  transformation 
$\sigma^2\tau\sigma^2\tau\sigma\tau   \sigma^2\tau\sigma^2\tau\sigma   $
to the solutions with $f_1=0$ we obtain the set of 
special solutions with $\alpha_1=2$,
and these give rise to the special solutions of $P_\mathrm{IV}$, equation 
(\ref{eq:p4}), with $\beta=-8$, that satisfy the first order differential 
equation 
\begin{eqnarray*}
 0&=&   \left(\frac{\mathrm{d}w}{\mathrm{d}z}\right)^{4}
 +   8\,\left(\frac{\mathrm{d}w}{\mathrm{d}z}\right)^{3}
 +   \left( - 2w^4-8z w^3-8(z^2-\alpha) w^2\right) 
      \left(\frac{\mathrm{d}w}{\mathrm{d}z}\right)^{2} 
 +   \left( -8w^{4}-32z w^3
     -32(z^2-\alpha)w^2 
-128 \right) \left(\frac{\mathrm{d}w}{\mathrm{d}z}\right) \\ 
&&
+ w^{8}+8z w^7 + 8(3z^2-\alpha) w^6 + 32z(z^2-\alpha) w^5 + 16\left( 
(z^2-\alpha)^2 +1 \right)   w^4 
+64 z w^3   -256 
\end{eqnarray*}
Thus there are first order equations of higher and higher degree that 
are consistent with $P_\mathrm{IV}$ for suitable values of the parameters. 
This generalizes the old observation that for suitable values of the 
parameters $P_\mathrm{IV}$ is consistent with a Riccati equation 
 \cite{FA}. 
 
 \section{Concluding Remarks}
 
 In the course of this paper a framework has emerged for classification of
 real solutions of $sP_\mathrm{IV}$ (and thus also for $P_\mathrm{IV}$):  a solution  is classified by its asymptotic behavior as $x\rightarrow\pm\infty$ and its  singularity sequence, with the asymptotic behavior being 
 superfluous in one or both limits in the cases of singly infinite or doubly 
 infinite singularity sequences respectively.  We have established a strong 
 result for the existence of solutions with no singularities. For the case 
 of nonzero parameter values, solutions exist exhibiting all the transitions 
 allowed by Table \ref{tab:excls2}.  For the case of solutions with singularities
 and with generic ($C$-type) asymptotic behavior (if needed), we have given a
 list of the possible singularity sequences in the $+++$ parameter case, from which a similar list can be derived for an arbitrary generic  (noninteger) set of parameters. In particular we have seen that for any generic set of 
 parameters there is a unique allowed finite singularity sequence for solutions with $C$ to $C$ asymptotics. Numerics in the $+++$ case indicate that all the permitted singularity sequences actually occur, with the possible exception 
of doubly infinite repetitions of the subsequences $A_1A_2A_3$ and $A_3A_2A_1$.  
 We are hopeful that it might be possible to exclude these possibilities using techniques not considered in the current  paper; it is well-known that there 
 are solutions of $P_\mathrm{IV}$ with elliptic function asymptotics for large argument  \cite{vereshchagin}, and this is a question about the connection formulae for these solutions. 
 
 Our work has all been on the basis of an assumption concerning the dynamical system described in our previous work  \cite{SchiffTwiton}. We are happy with this assumption as there is no evidence to the contrary, and it is an assumption of the simplest possible scenario (that the only possible asymptotics are the $B$ and $C$ behaviors we have described). However proving it looks difficult, 
 as  it involves the local stability properties of a periodic orbit in the case that the linearized approximation gives insufficient information. 
 
 We believe the approach given here for $P_\mathrm{IV}$ should be extendable to 
 other Painlev\'e equations. Relevant dynamical systems have been given in \cite{adler,wh}. However, the works of Chiba \cite{chiba,chiba2} suggest that more subtle compactifications will be involved. 
 
 
\appendix

\section{Numerical Methods for Integrating Through Poles}

We use the following simple idea to construct changes of the dependent 
variables that allow us to integrate through the three types of pole 
singularity of   $sP_\mathrm{IV}$. Near the $A_3$ type singularity 
the system has a Painlev\'e series as given by Equation (15) in 
\cite{SchiffTwiton}. This expresses $f_1,f_2,f_3$ in terms   of three
quantities $x-x_0, x_0, C$ all of which remain finite near the pole. 
Truncating this expansion in such a way that $f_1$ depends only on the first 
quantity (which we call $z_1$), $f_2$ depends only on the first and second
(which we call $2z_2$), and $f_3$ depends on the first, the second and the third
(which we call $-2z_3$) but on the latter only linearly, gives the substitution
 \begin{eqnarray*}
    f_1 &=& \frac1{z_1}   \ , \\
    f_2 &=&  - \frac1{z_1} + z_2    \ , \\
    f_3 &=& -\alpha_3  z_1  +  z_3 z_1^2 \ .     
\end{eqnarray*}
This is, by construction, an invertible change of variables, with inverse
\begin{eqnarray*}
    z_1 &=& \frac1{f_1}\ ,    \\
    z_2 &=&  f_1 + f_2  \ ,   \\
    z_3 &=& \alpha_3 f_1 +  f_1^2f_3\ .       
\end{eqnarray*}
The variables $z_1,z_2,z_3$ satisfy the system 
\begin{eqnarray*}
    z_1' &=&  1 + z_1 \left(z_1^2 z_3 - z_1(\alpha_1+\alpha_3) -z_2 \right)\ , \\
    z_2' &=& 1+ \alpha_3 + z_1 \left( z_1z_2z_3 - (2+\alpha_3)z_3   \right) \ ,   \\
    z_3' &=& z_2z_3  - \alpha_3(\alpha_1+\alpha_3) + z_1z_3 \left( 
              2\alpha_1 + 3\alpha_3 - 2 z_1z_3  \right) \ . 
\end{eqnarray*}
As soon as a pole of the appropriate type in the $f$ system is approached (say if  $|f_1|,|f_2|>10$) we change to the $z$ variables and integrate there until the pole  is passed.  Similar changes of variables are used near the other 
two types of pole. 

\section{Numerics for $B$ to $B$ solutions}

As visible from Figures \ref{fig:ppp},\ref{fig:ppm},\ref{fig:pmm}, 
solutions with $B$ to $B$ type asymptotics occur at the crossing points 
of the stable manifolds of the $B^+$ points with the unstable manifolds
of the $B^-$ points. It is possible to numerically search for these solutions 
by locating four points in the four regions adjacent to the crossing, 
distinguished by the associated asymptotic behaviors as $t\rightarrow\pm\infty$, 
and then recursively reducing the size of the associated quadrilateral (as 
measured by its perimeter) until the crossing point is found to sufficient 
accuracy.  The following plots show some examples. 
Figures \ref{fig:pppzoom},\ref{fig:+++BtoB} are relevant to one of the 
$6$ crossing points in the $+++$ case. Figure \ref{fig:pppzoom} shows the different  asymptotics in the four adjacent regions,  and \ref{fig:+++BtoB}
shows the $B$ to $B$ solution once the initial condition has been found to 
sufficient accuracy to give an accurate plot on the interval $[-10,10]$. 
Figures  \ref{fig:ppmzoom},\ref{fig:++-BB1},\ref{fig:ppmzoom2},\ref{fig:++-BB2} illustrate for the two different types of intersection point that occur in the $++-$ case. In one case the resulting $B$ to $B$ solution has no zeros in any components, in the other there is a zero in one component (see Table 
\ref{tab:excls2}). Finally, 
Figures \ref{fig:pmmzoom},\ref{fig:+--BB1} illustrate in the $+--$ case. 

\newpage

\tikzset{mark size=0}

\begin{figure}[!h]
	\centering
    \includegraphics[width=0.55\textwidth]{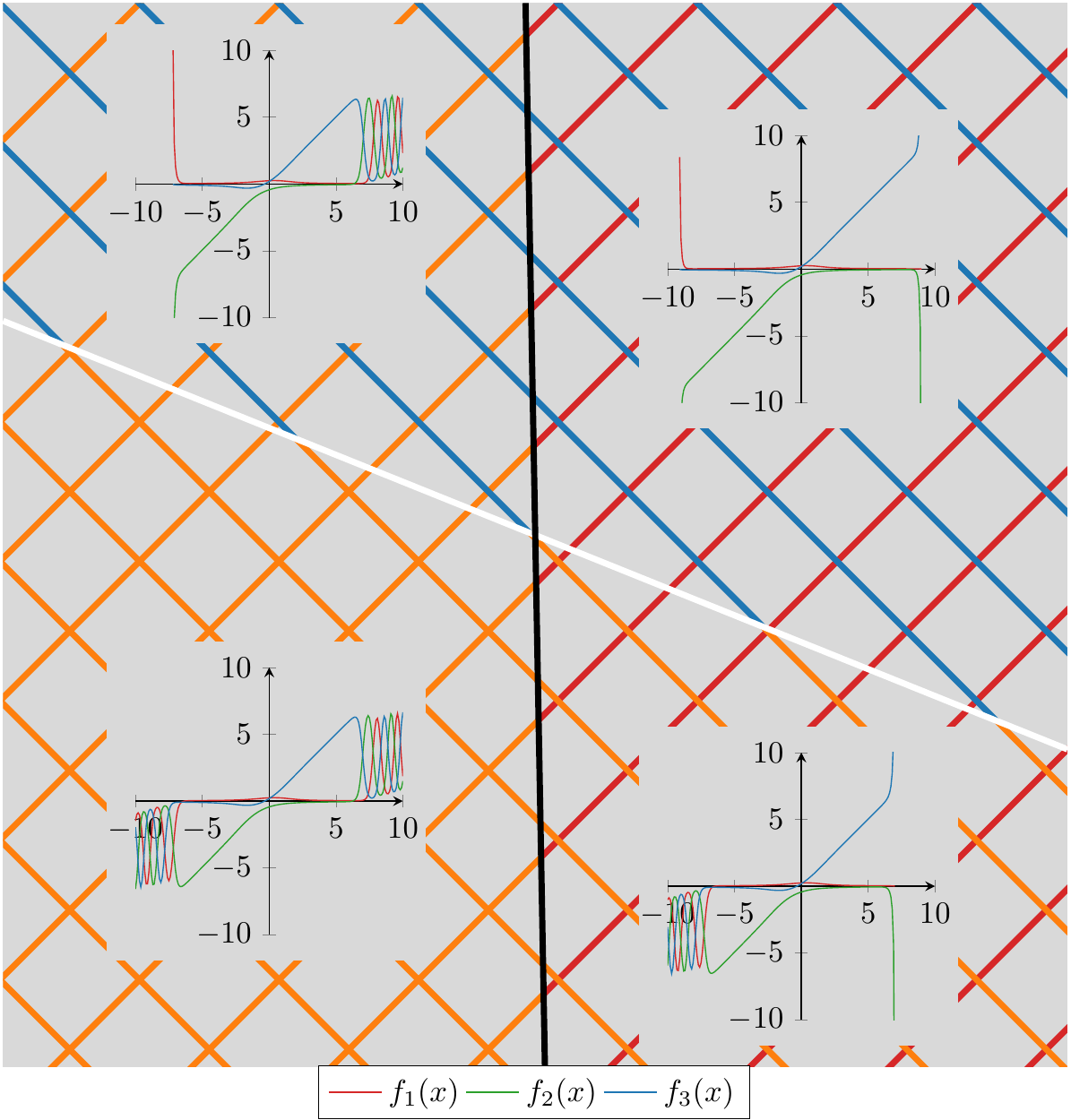}
	\caption{A zoomed-in version of the first-quadrant intersection in Figure \ref{fig:ppp}.}
	\label{fig:pppzoom}
\end{figure}	

\begin{figure}[!h]
    \centering
    \includegraphics[width=0.55\textwidth]{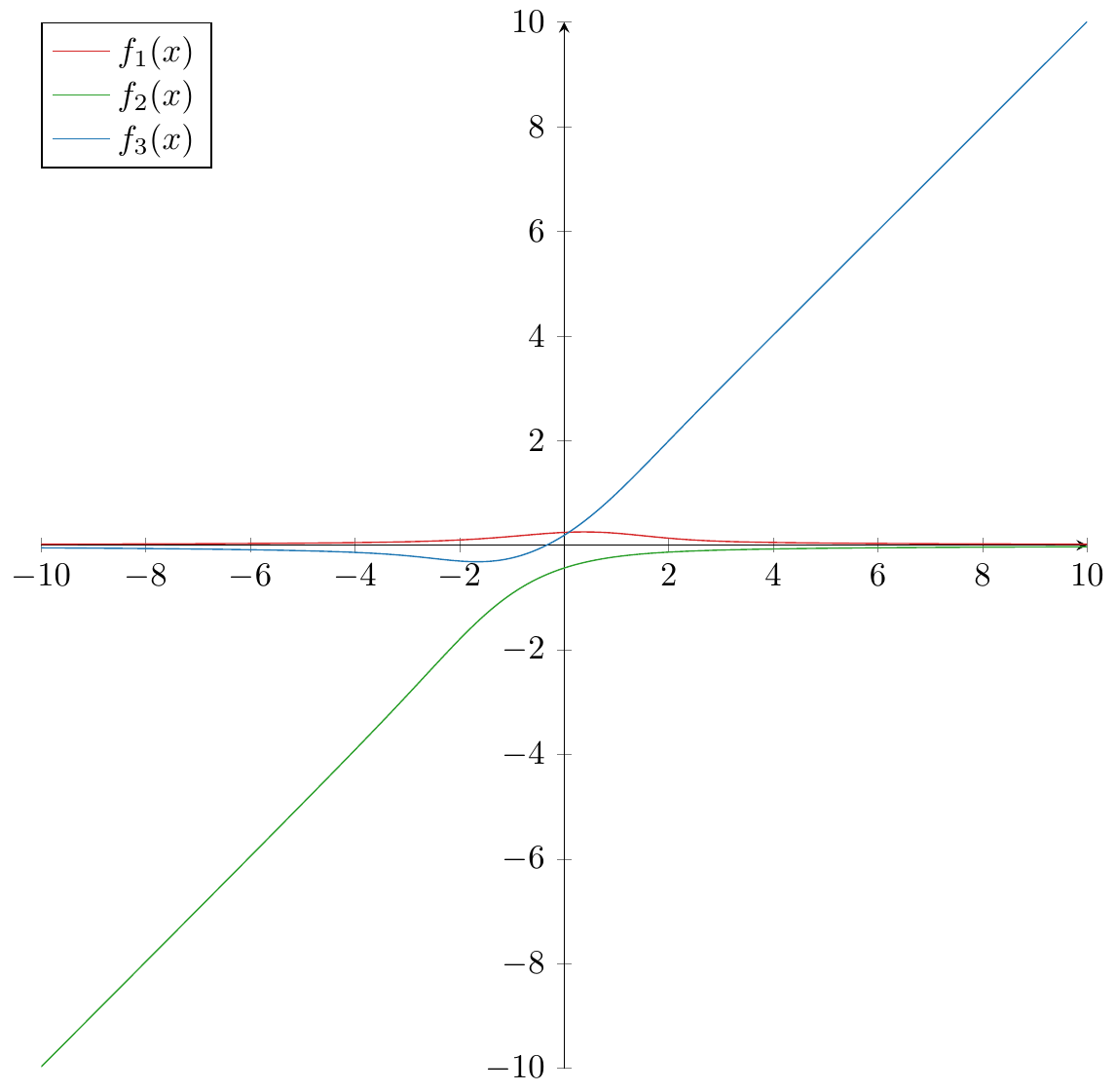}
    \caption{The $B_2^{-} \to B_3^{+}$ solution corresponding to the intersection in Figure \ref{fig:pppzoom}.}
    \label{fig:+++BtoB}
\end{figure}

\begin{figure}[!h]
	\centering
        \includegraphics[width=0.55\textwidth]{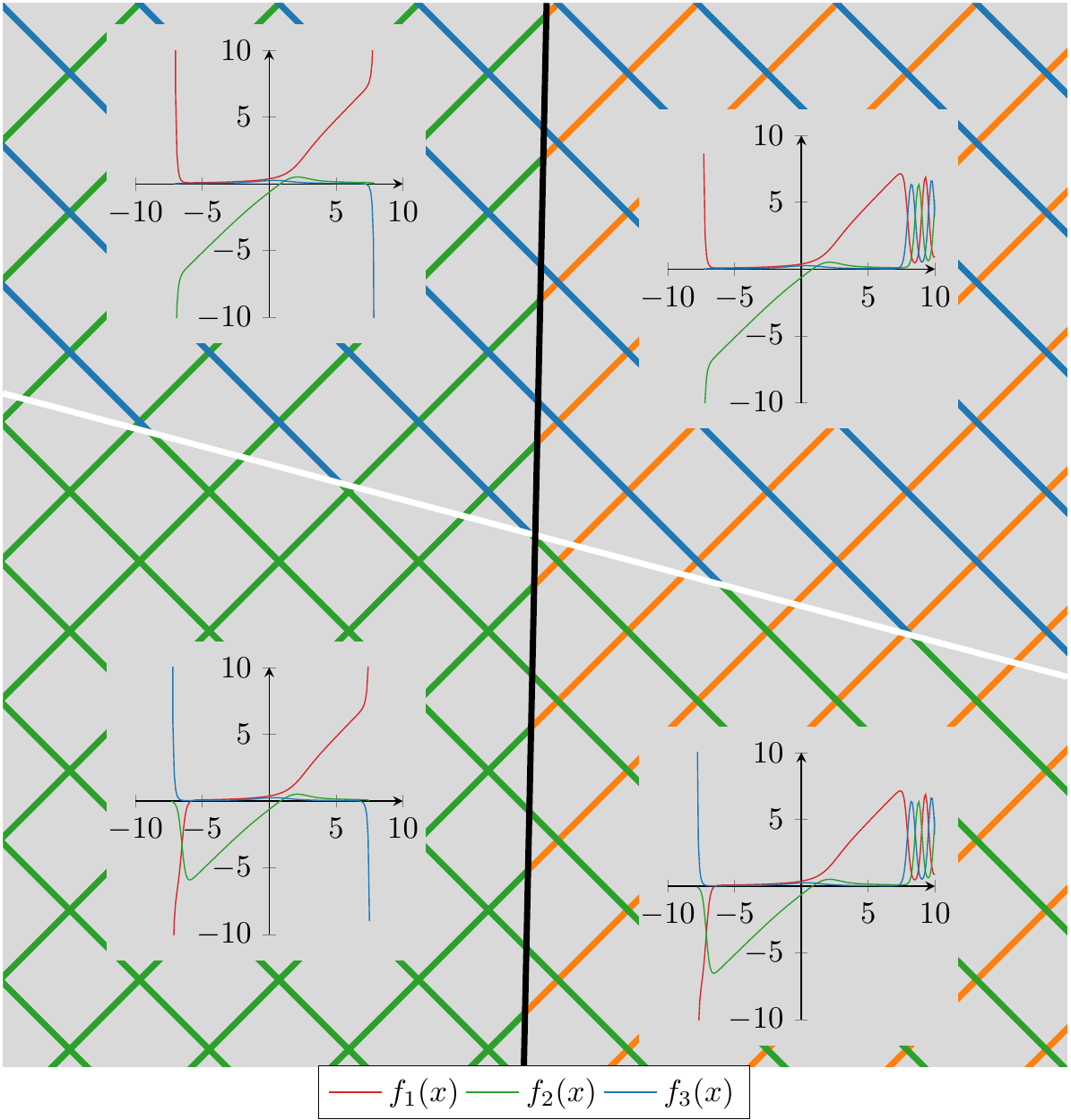}
		\caption{A zoomed-in version of a first-quadrant intersection in Figure \ref{fig:ppm}.}
		\label{fig:ppmzoom}
\end{figure}	

\begin{figure}[!h]
    \centering
    \includegraphics[width=0.55\textwidth]{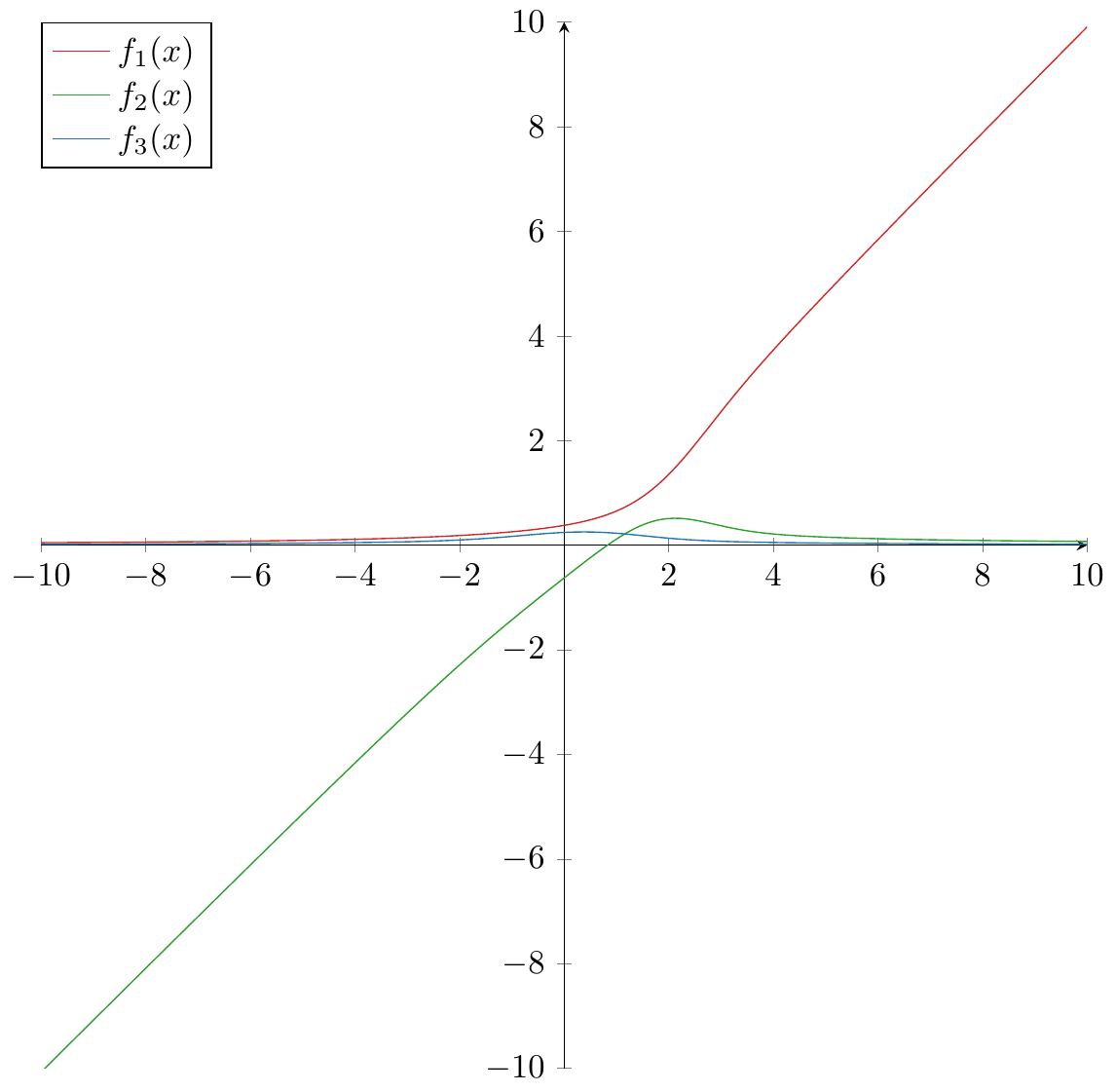}
    \caption{The $B_2^{-} \to B_1^{+}$ solution corresponding to the intersection in Figure \ref{fig:ppmzoom}.}
    \label{fig:++-BB1}
\end{figure}

\begin{figure}[!h]
	\centering
        \includegraphics[width=0.55\textwidth]{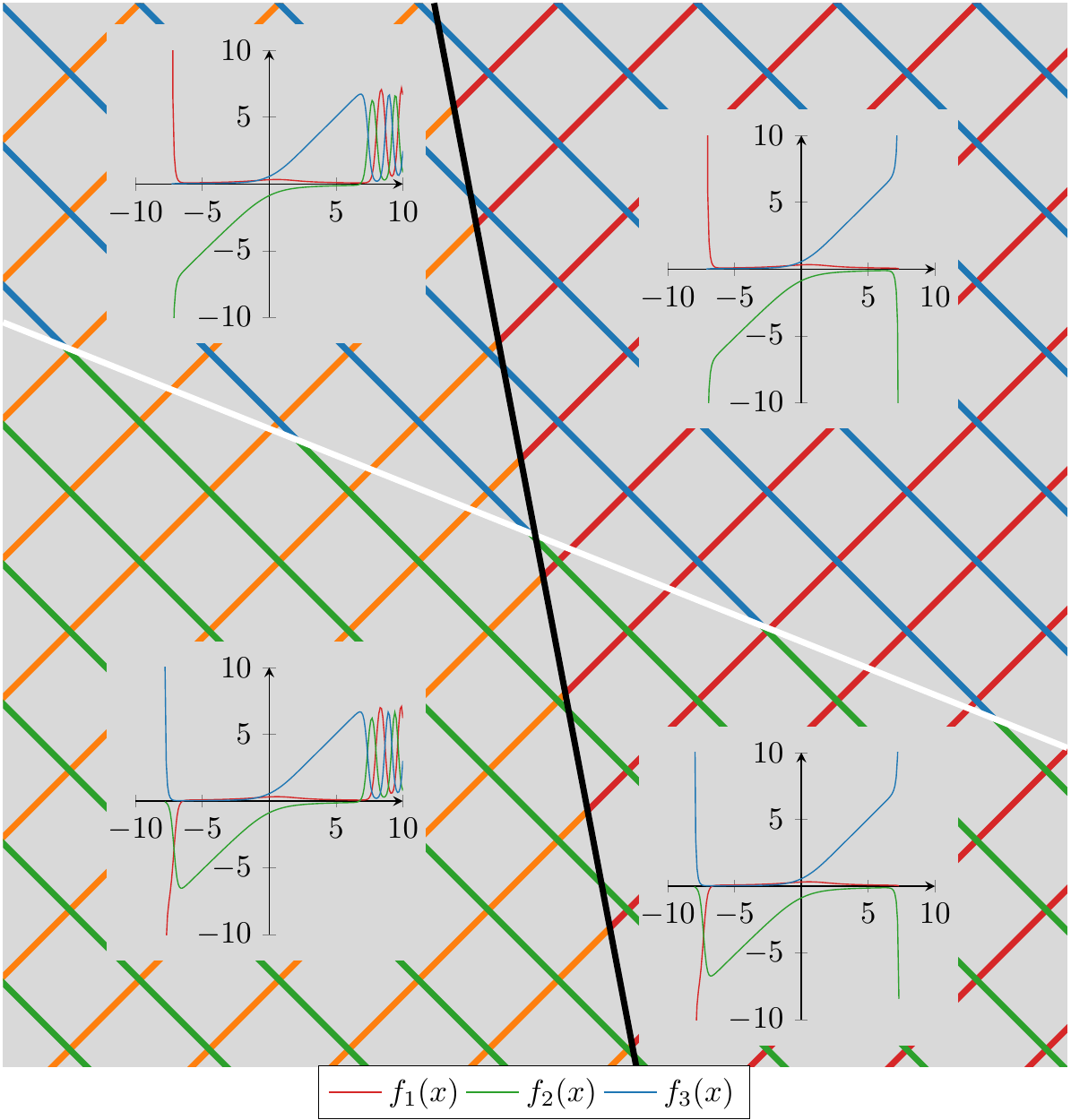}
		\caption{A zoomed-in version of the second first-quadrant intersection in Figure \ref{fig:ppm}.}
		\label{fig:ppmzoom2}
\end{figure}	

\begin{figure}[!h]
    \centering
    \includegraphics[width=0.55\textwidth]{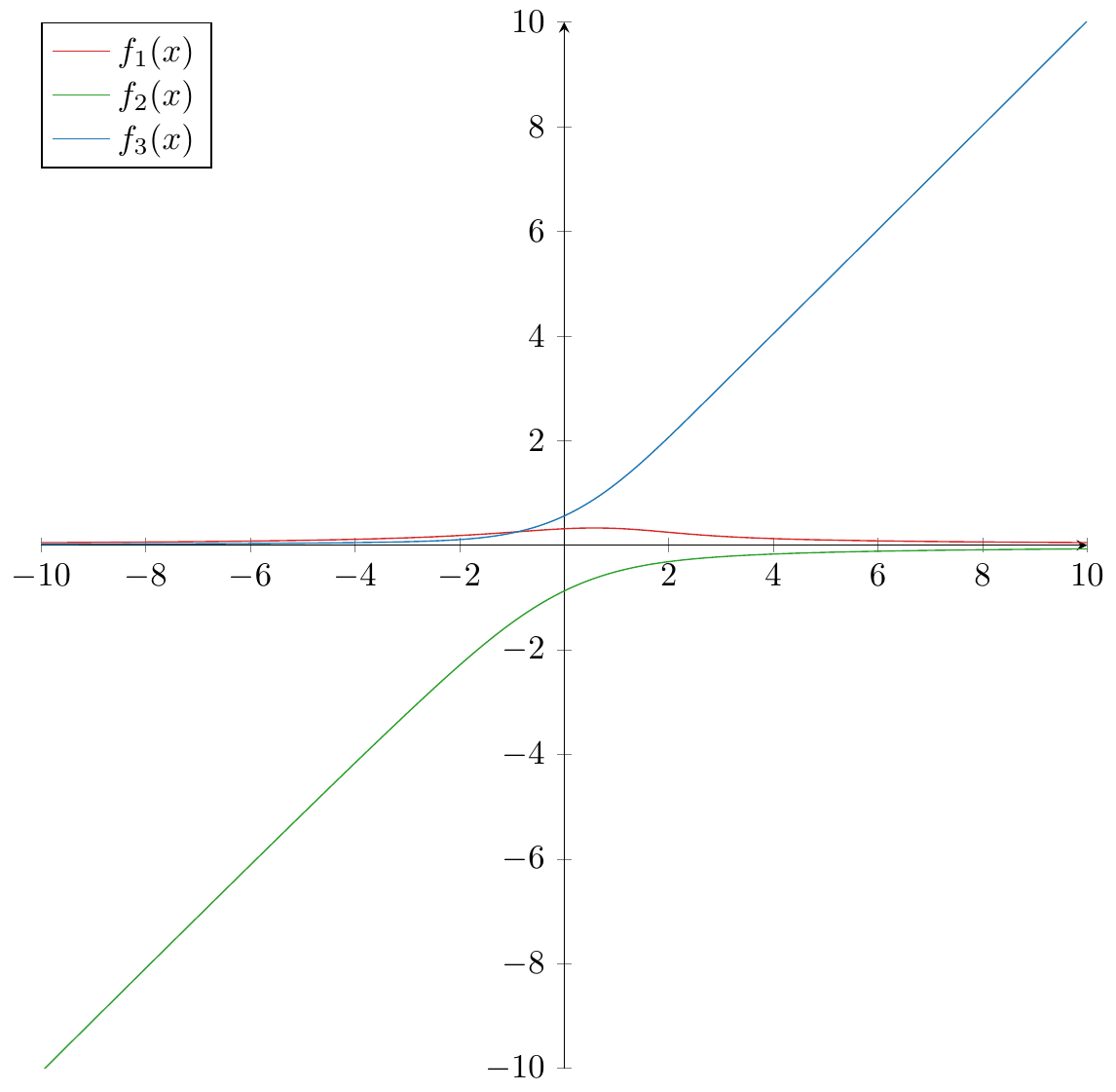}
    \caption{The $B_2^{-} \to B_3^{+}$ solution corresponding to the intersection in Figure \ref{fig:ppmzoom2}.}
    \label{fig:++-BB2}
\end{figure}


\begin{figure}[!h]
	\centering
	    \includegraphics[width=0.55\textwidth]{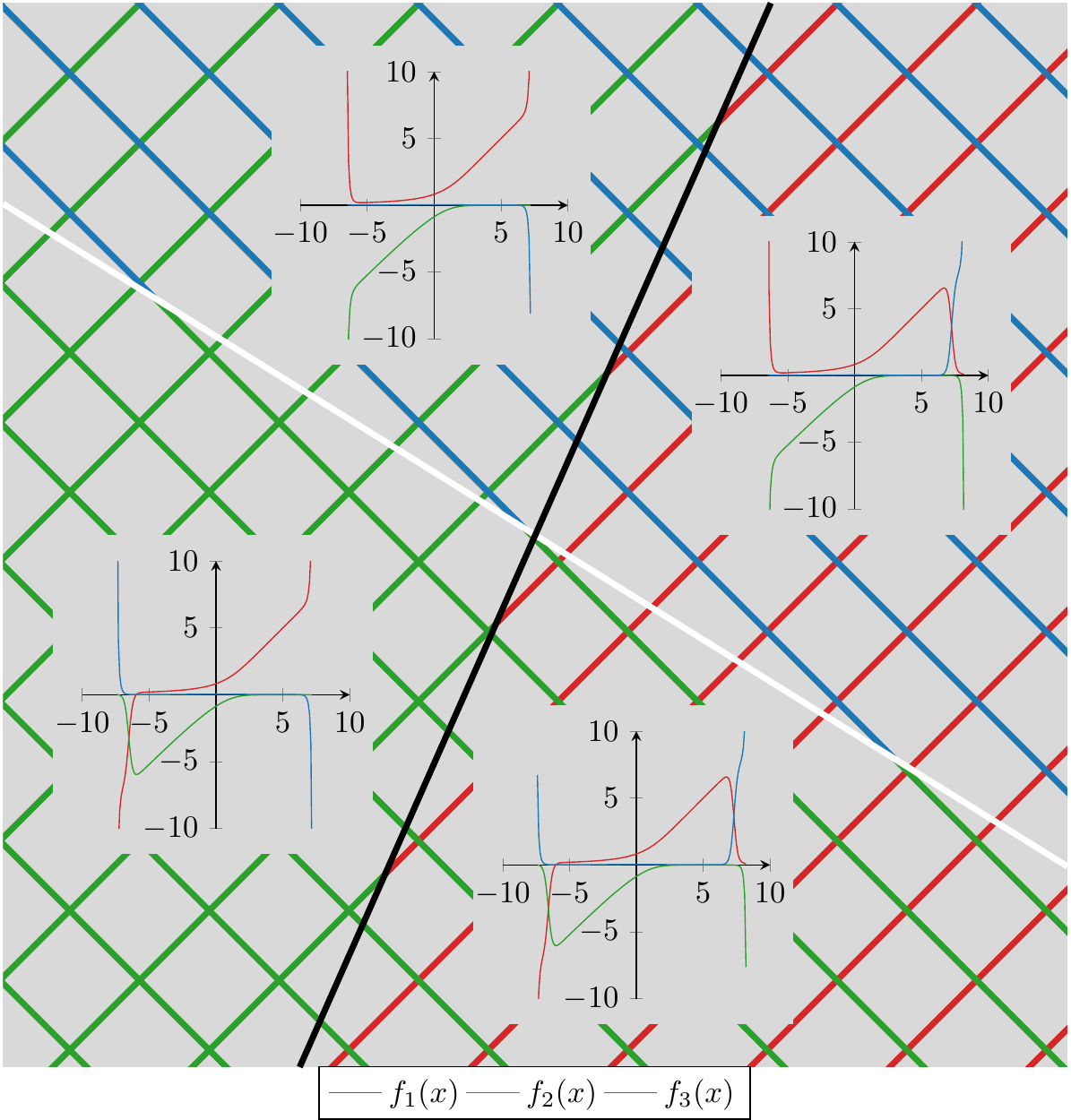}
		\caption{A zoomed-in version of the first-quadrant intersection in Figure \ref{fig:pmm}.}
		\label{fig:pmmzoom}
\end{figure}	

\begin{figure}[!h]
    \centering
    \includegraphics[width=0.55\textwidth]{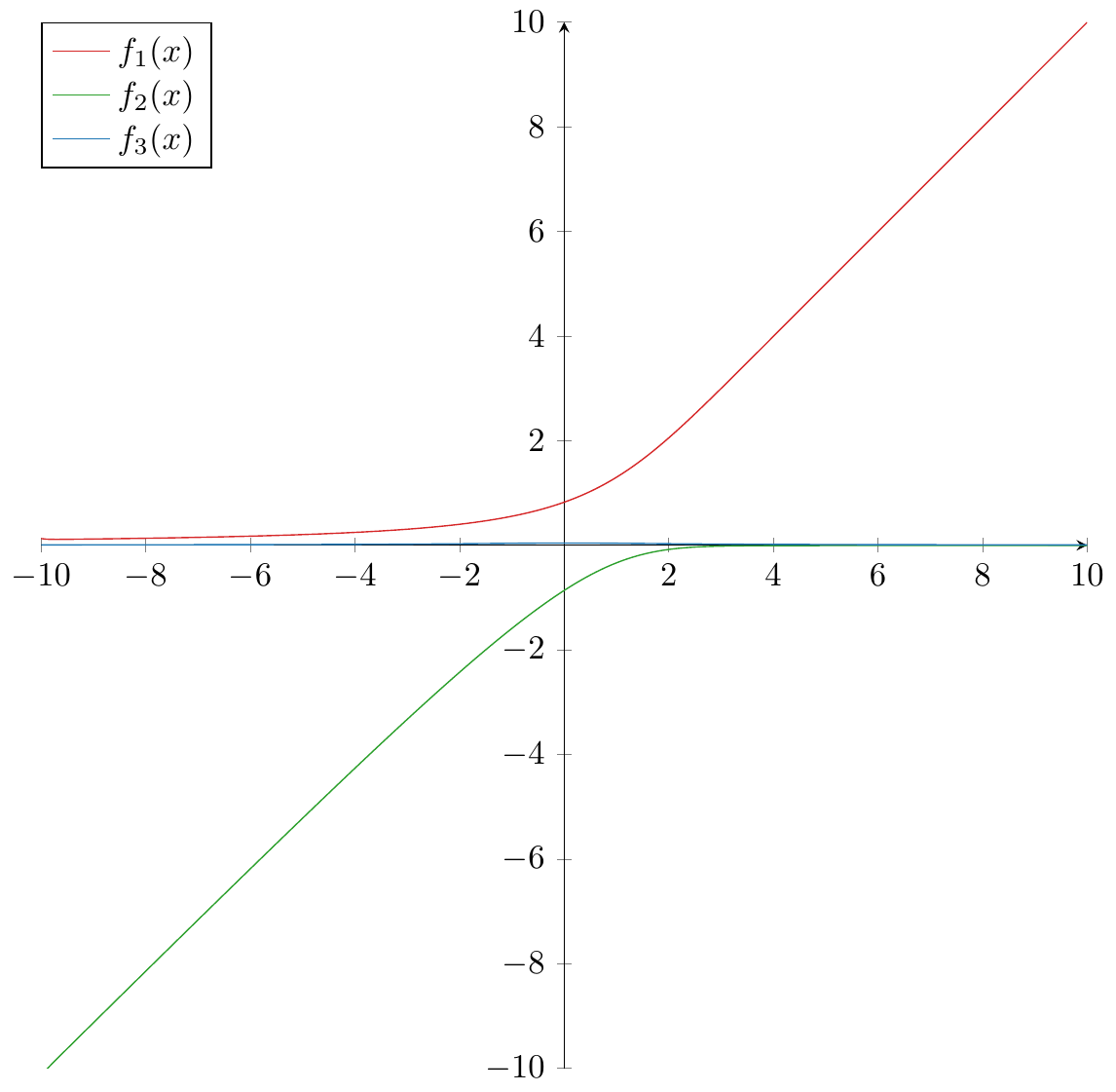}
    \caption{The $B_2^{-} \to B_1^{+}$ solution corresponding to the intersection in Figure \ref{fig:pmmzoom}.}
    \label{fig:+--BB1}
\end{figure}

\newpage



\end{document}